\newtheorem{thm}{Theorem}
\newtheorem{lemma}{Lemma}
\newtheorem{prop}{Proposition}
\newtheorem{defn}{Definition}
\newtheorem{remark}{Remark}
\newtheorem{cor}{Corollary}[thm]
\newcommand{\bX} {\boldsymbol{X}}
\newcommand{\bY} {\boldsymbol{Y}}
\newcommand{\bW} {\boldsymbol{W}}
\newcommand{\bF} {\boldsymbol{F}}
\newcommand{\bZ} {\boldsymbol{Z}}
\newcommand{\bXi} {\boldsymbol{\Xi}}
\newcommand{\sS} {\mathcal{S}}
\def\ba#1\ea{\begin{align}#1\end{align}}
\newcommand{{\bR}} {\right)}
\newcommand{\bp} {\begin{proof}}
\newcommand{\ep} {\end{proof}}
\newcommand{\bLF} {\left\{}
\newcommand{{\bRF}} {\right\}}
\newcommand{\uuline}[1]{\underline{\underline{#1}}}
\newcommand{\ooline}[1]{\overline{\overline{#1}}}
\newcommand{\oln} {\overline}
\newcommand{\uln} {\underline}
\begin{document}

\title{Feedback Does Not Increase the Capacity of Compound Channels with Additive Noise}

\author{Sergey Loyka,  Charalambos D. Charalambous

%\vspace*{-1.\baselineskip}

% author names and affiliations
% use a multiple column layout for up to three different
% affiliations
%\author{\IEEEauthorblockN{Sergey Loyka}
%\IEEEauthorblockA{School of Electrical Engineering\\ and Computer Science\\
%University of Ottawa, Ontario, Canada\\
%Email: sergey.loyka@ieee.org}
%\and
%\IEEEauthorblockN{Charalambos D. Charalambous}
%\IEEEauthorblockA{ECE Department\\
%University of Cyprus, Nicosia, Cyprus\\
%Email: chadcha@ucy.ac.cy}

% conference papers do not typically use \thanks and this command
% is locked out in conference mode. If really needed, such as for
% the acknowledgment of grants, issue a \IEEEoverridecommandlockouts
% after \documentclass

\thanks{The material in this paper was presented in part at the IEEE International Symposium on Information Theory, Barcelona, Spain, July 2016.}

\thanks{S. Loyka is with the School of Electrical Engineering and Computer Science, University of Ottawa, Ontario, Canada, e-mail: sergey.loyka@ieee.org}

\thanks{C.D. Charalambous is with the ECE Department, University of Cyprus, Nicosia, Cyprus, e-mail: chadcha@ucy.ac.cy}}

\maketitle

% to remove page numbers
 %\pagenumbering{gobble}

\begin{abstract}
A discrete compound channel with memory is considered, where no stationarity, ergodicity or information stability is required, and where the uncertainty set can be arbitrary. When the discrete noise is additive but otherwise arbitrary and there is no cost constraint on the input, it is shown that the causal feedback does not increase the capacity. This extends the earlier result obtained for general single-state channels with full transmitter (Tx) channel state information (CSI) to the compound setting. It is further shown that, for this compound setting and under a mild technical condition on the additive noise, the addition of the full Tx CSI does not increase the capacity either, so that the worst-case and compound channel capacities are the same. This can also be expressed as a saddle-point in the information-theoretic game between the transmitter (who selects the input distribution) and the nature (who selects the channel state), even though the objective function (the inf-information rate) is not convex/concave in the right way. Cases where the Tx CSI does increase the capacity are identified.

Conditions under which the strong converse holds for this channel are studied. The ergodic behaviour of the worst-case noise in otherwise information-unstable channel is shown to be both sufficient and necessary for the strong converse to hold, including feedback and no feedback cases.
\end{abstract}

\vspace*{-0.5\baselineskip}
%======================================================================================
\section{Introduction}

Many channels, especially wireless ones, are non-erogodic, non-stationary in nature \cite{Biglieri} so that the standard tools developed for stationary ergodic channels do not apply and new methods are needed for such channels. A powerful method to deal with general channels, for which stationarity, ergodicity or  information stability are not required, is the information density (spectrum) approach \cite{Verdu}\cite{Han}. In this method, the key quantity is the inf-information rate rather than the mutual information since the latter does not have operational meaning for information-unstable channels.

In real systems, channel state information (CSI) may be inaccurate or limited due to a number of reasons such as the limitations of channel estimation or feedback link \cite{Biglieri}. The concept of compound channel is one way to address this issue whereby a codebook is designed to work for any channel in the uncertainty set, without any knowledge of what channel state is currently in effect \cite{Lapidoth-98A}. A number of results have been obtained for the capacity of compound channels, see \cite{Lapidoth-98A} for a detailed review. While most of the studies do not consider feedback, the compound capacity of a class of finite-state memoryless (and hence information-stable) channels with deterministic feedback was established in \cite{Shrader-09}.

While most of the known results require some form of information stability for any channel in the uncertainty set, a general formula for compound channel capacity has been established in \cite{Loyka-15}\cite{Loyka-15-2} where no stationarity, ergodicity or information stability is required, and the uncertainty set can be arbitrary. The key quantity in this setting is the compound inf-information rate, which is an extension of the inf-information rate of \cite{Verdu}\cite{Han} to the compound setting.

In this paper, we extend the study in \cite{Loyka-15}\cite{Loyka-15-2} and consider a general compound channel with memory and additive noise (no information  stability is required so that the channel can be non-stationary, non-ergodic; the uncertainty set can be arbitrary), where all alphabets are discrete, there is no cost constraint and a noiseless, causal feedback link is present, where all past channel outputs are fed back to the transmitter. We consider a scenario where no CSI is available at the transmitter but full CSI is available to the receiver. Under this setting, we demonstrate that the feedback does not increase the compound channel capacity\footnote{In this paper, we consider the classical compound setting \cite{Biglieri}\cite{Lapidoth-98A}\cite{Shrader-09} where a fixed-rate code is designed to operate on any channel in the uncertainty set and its decoding regions are allowed to depend on the state (but not the encoding process); variable-rate coding, while being interesting, is beyond the paper's scope.}. This extends the earlier result in \cite{Alajaji-94} established for single-state fully-known channels (full CSI available at both ends) to the compound setting. Since noisy feedback cannot outperform noiseless one, this also holds for the former case.

Under a mild technical condition on the additive noise, we further show that the availability of the full Tx CSI does not increase the capacity either: the worst-case and compound channel capacities are the same. This fact is remarkable since achieving the worst-case capacity allows for the codebooks to depend on the channel state while the compound channel capacity requires the codebooks to be independent of the channel state (and hence no feedback to the Tx is needed). This can also be expressed as the existence of a saddle point in the information-theoretic game between the transmitter (who selects the input) and the nature (who selects the channel state): neither player can deviate from the optimal strategy without incurring a penalty. This result is rather surprising since the underlying objective function (the inf-information rate) is \textit{not} convex/concave in the right way and the uncertainty set can be non-convex as well (e.g. discrete) so that the celebrated von Neumann's mini-max Theorem \cite{Boyd-04} or its extensions \cite{Zeidler-86} cannot be used to establish the existence of a saddle-point. This shows that neither convexity of the feasible set nor of the objective function are necessary for a saddlepoint to exist in this information-theoretic game. This saddlepoint result extends the earlier results established for  stationary and ergodic (and hence information-stable) channels, e.g. in \cite{Dobrushin-59-2}-\cite{Loyka-15-3}, where mutual information is a proper metric, to the realm of information-unstable scenarios, where the inf-information rate has to be used as a metric since the mutual information does not have operational meaning anymore.

Next, we consider some cases when the Tx CSI does increase the capacity. This turns out to be somewhat surprising since, in all such cases, the optimal input distribution is uniform, regardless of the channel state (a common wisdom suggests that the Tx CSI increases the capacity via proper selection of the input distribution tailored to the channel state; our results indicate that this does not have to be the case). Examples are provided to facilitate understanding and insights.

Finally, conditions under which the strong converse holds for this channel are studied. The ergodic behaviour of the worst-case noise in otherwise information-unstable channel is shown to be both sufficient and necessary for the strong converse to hold, including feedback and no feedback cases. Examples are given to illustrate scenarios when the strong converse holds and when it does not.

The rest of the paper is organized as follows. Section II introduces the channel model and notations. Section III discusses the capacity of general (information-unstable) compound channels without feedback. The impact of feedback is included in Section IV. The impact of the channel state information at the transmitter and the existence of a saddlepoint are studied in Section V. Examples are given in Section VI. Sufficient and necessary conditions for the strong converse to hold are given in Section VII.

%======================================================================================
\section{Channel Model}

Let us consider the following discrete-time model of a compound discrete channel with additive noise:
\ba
\label{eq.ch.model}
y^n = g_s^n(x^n) + \xi^n_{s}
\ea
where $x^n, y^n, \xi^n_{s}$ are the input, output and noise sequences of length $n$, $x^n=\{x_1,..,x_n\}$ and likewise for $y^n$ and $\xi^n_s$; the functions $g_s^n(x^n)=\{g_{s1}(x_1), g_{s2}(x^2),.., g_{sn}(x^n)\}$ model the channel's impulse response and are required to induce one-to-one mapping $x^n \leftrightarrow z^n = g_s^n(x^n)$. All alphabets as well as operations are $M$-ary, $s \in \mathcal{S}$ denotes the channel (noise) state, and $\mathcal{S}$ is the (arbitrary) channel uncertainty set. The compound sequence $\xi^n_s = \{\xi_{1s},..,\xi_{ns}\}$ represents arbitrary additive noise, e.g. non-ergodic, non-stationary in general, independent of the channel input when used without feedback.

Note that the channel is not memoryless, it may include inter-symbol interference (ISI) via $g_s^n(\cdot)$, e.g.
\ba
\label{eq.ch.model.z}
z_k = g_{sk}(x^k) = \sum_{i=0}^{l_s} x_{k-i}
\ea
where $l_s$ is the depth of the ISI and where we set $x_i=0$ for $i<0$. The noise is also allowed to have arbitrary memory.

The channel is not required to be information stable (in the sense of Dobrushin \cite{Dobrushin-59} or Pinsker \cite{Pinsker-64}). We assume that $s$ is known to the receiver but not the transmitter, who knows the (arbitrary) uncertainty set $\mathcal{S}$. This is motivated by the fact that channel estimation is done at the receiver; $M$ may be small, e.g. binary alphabets, while the cardinality of $\mathcal{S}$ can be very large (in fact, $\mathcal{S}$ can be a continuous set) so it is not feasible in practice to feed $s$ back to the transmitter via e.g. a binary feedback channel.

The channel has noiseless feedback with 1-symbol delay (which can also be extended to noisy feedback - see Remark 3), so that the transmitted symbol $x_k$ at time $k=1..n$ is selected as $x_k^{(n)} = f_k^{(n)}(w^n y^{k-1})$\footnote{our result will also hold for a more general feedback of the form $u_k=\beta_k(y^k)$, where $\{\beta_k\}$ are arbitrary feedback functions, see Remark 4.}, where $n$ is the blocklength, $w^n$ denotes the message to be transmitted via $n$-symbol block, $x^n = \{x_1^{(n)} ... x_n^{(n)}\}$ and likewise for $y^n$ and $\xi_s^{n}$; $f_k^{(n)}$ denotes the encoding function at time $k$, which depends on the selected message $w^n$ and past channel outputs $y^{k-1}$ (due to the feedback); $f^n = \{f_1^{(n)} ... f_n^{(n)}\}$. This induces the input distribution of the form:
\ba
p(x^n||y^{n-1}) = \prod_{k=1}^n p(x_k|x^{k-1} y^{k-1})
\ea
where $||$ denotes causal conditioning \cite{Kramer-03}. No cost constraint is imposed on the input.

\textit{Notations}: To simplify notations, we use $p(x|y)$ to denote conditional distribution $p_{x|y}(x|y)$ when this causes no confusion (and likewise for joint and marginal distributions) and shortcut $x_k^{(n)}$ as $x_k$ with understanding that all sequences and distributions depend on blocklength $n$ and may be different for different blocklengths. Capitals ($X$) denote random variables while lower-case letters ($x$) denote their realizations or arguments of functions; $\bX = \{X^n\}_{n=1}^{\infty}$.

\begin{figure}[t]%[htbp]
\centerline{\includegraphics[width=3.5in]{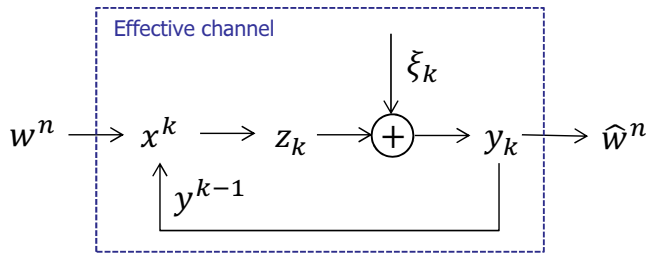}}
\caption{A general channel with additive noise and causal feedback $\{w^n y^{k-1}\} \rightarrow x^k \rightarrow y_k$ and the effective channel $w^n \rightarrow y^n$ (dashed box), where $z_k = g_{sk}(x^k)$, $x^k = f^k(w^n y^{k-1})$.}
\label{fig.1}
\end{figure}

%======================================================================================
\section{Capacity without Feedback}

First, we briefly review the relevant capacity result in \cite{Loyka-15}\cite{Loyka-15-2}, which apply to general compound channels $p_s(y^n|x^n)$, not only those in \eqref{eq.ch.model}; channels can be information-unstable, e.g. non-stationary, non-ergodic, but without feedback, i.e. $x^k = f^k(w^n)$ (the input depends only on the message and the past inputs, not the outputs). The compound channel capacity is defined operationally as the maximum achievable rate for which the error probability can be made arbitrary small and uniformly so over the whole set of channels and where the codewords are independent of channel state (see e.g. \cite{Lapidoth-98A}\cite{Loyka-15-2} for details).

\begin{thm}[\cite{Loyka-15}\cite{Loyka-15-2}]
\label{thm.C.general}
Consider a general compound channel where the channel state $s \in \mathcal{S}$ is known to the receiver but not the transmitter and is independent of the channel input; the transmitter knows the (arbitrary) uncertainty set $\mathcal{S}$. Its compound channel capacity (without feedback) is given by
\ba
\label{eq.thm.Cc.1}
C_{NFB} = \sup_{\bX} \uuline{I}(\bX;\bY)
%\underset{=}{I}
\ea
where the supremum is over all sequences of finite-dimensional input distributions and $\uuline{I}(\bX;\bY)$ is the compound inf-information rate,
\ba
\label{eq.uuline{I}}
\uuline{I}(\bX;\bY) = \sup_{R}\bLF R: \lim_{n\rightarrow\infty} \sup_{s\in \sS} \Pr\left\{Z_{ns} \le R \right\} =0 \bRF
\ea
where $Z_{ns} = n^{-1}i(X^n;Y^n|s)$ is the normalized information density under channel state $s$.\qed
\end{thm}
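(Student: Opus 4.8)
The plan is to establish the two inequalities $C_{NFB}\le \sup_{\bX}\uuline{I}(\bX;\bY)$ (converse) and $C_{NFB}\ge \sup_{\bX}\uuline{I}(\bX;\bY)$ (achievability) separately, working throughout with the information-spectrum (information density) machinery adapted to the compound setting, and exploiting the fact that the receiver knows $s$ so that the decoding rule may be state dependent while the codebook may not.

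For the converse, I would start from the Verd\'u--Han lower bound on the error probability: any length-$n$ codebook of size $M=e^{nR}$, inducing the input distribution $\bX$ that is uniform over its codewords, satisfies for every state $s\in\sS$ and every $\gamma>0$ the bound $\epsilon_n(s)\ge \Pr\{Z_{ns}\le R-\gamma\}-e^{-n\gamma}$, where $\epsilon_n(s)$ is the (average) error probability under state $s$. Since the codebook is state independent, the \emph{same} $\bX$ appears for all $s$. Taking $\sup_{s\in\sS}$ on both sides and using that a reliable compound code must satisfy $\sup_{s}\epsilon_n(s)\to 0$ gives $\sup_{s}\Pr\{Z_{ns}\le R-\gamma\}\le \sup_s\epsilon_n(s)+e^{-n\gamma}\to 0$. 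By the definition of the compound inf-information rate this forces $R-\gamma\le\uuline{I}(\bX;\bY)\le\sup_{\bX}\uuline{I}(\bX;\bY)$, and letting $\gamma\downarrow 0$ yields the converse. The attractive feature here is that the single $\sup_s$ in the definition of $\uuline{I}$ is exactly matched by the $\sup_s$ of the per-state error, so an arbitrary (even uncountable) $\sS$ causes no difficulty in this direction.

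For achievability I would fix $\bX$ with $\uuline{I}(\bX;\bY)>R$, pick $\gamma>0$ with $R+\gamma<\uuline{I}(\bX;\bY)$, draw the $M\approx e^{nR}$ codewords i.i.d.\ from the finite-dimensional law of $X^n$, and let the state-aware receiver use the threshold (maximum-information-density) decoder that accepts $c_m$ when $i(c_m;y^n|s)>n(R+\gamma)$. Averaged over the random codebook, the error under state $s$ splits into a ``miss'' term, whose mean is $\Pr\{Z_{ns}\le R+\gamma\}$ and which vanishes uniformly in $s$ by the definition of $\uuline{I}$, and a ``confusion'' term bounded, via the change-of-measure estimate $\Pr_{X^n}\{i(X^n;y^n|s)>n(R+\gamma)\}\le e^{-n(R+\gamma)}$, by $(M-1)e^{-n(R+\gamma)}\le e^{-n\gamma}$, uniformly in $s$ and $y^n$. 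Hence $\sup_s \mathbb{E}_{\sC}[\epsilon_n(\sC,s)]\to 0$, and since $R<\sup_{\bX}\uuline{I}$ is arbitrary this would give the claimed bound on $C_{NFB}$.

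The main obstacle is the final step: passing from $\sup_s\mathbb{E}_{\sC}[\epsilon_n(\sC,s)]\to 0$ to the existence of a \emph{single} deterministic codebook $\sC$ with $\sup_s\epsilon_n(\sC,s)\to 0$, as required by the operational (uniform-over-states) definition of compound capacity. The difficulty is that the clean $e^{-n\gamma}$ confusion bound relies on averaging over the codebook, so for a \emph{fixed} codebook the confusion probability can be large for an adversarially chosen state, and the supremum over an arbitrary/uncountable $\sS$ does not commute with the codebook expectation; a naive union bound over states is unavailable. I would address this with a compound Feinstein-type construction that produces one codebook whose state-dependent threshold decoding regions are simultaneously good for all $s$, leveraging that the confusion estimate already holds uniformly in $(s,y^n)$ and that the uniform convergence $\sup_s\Pr\{Z_{ns}\le R+\gamma\}\to 0$ is built into the definition. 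Selecting the worst state (a measurability/selection issue) and extracting the codebook through a Markov-inequality argument applied to the combined miss-plus-confusion error constitute the delicate technical core of the proof.
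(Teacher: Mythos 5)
Your converse is sound and is exactly the route the paper takes: the Verd\'u--Han error lower bound applied per state to the (state-independent) codebook distribution, followed by $\sup_s$ on both sides, so that the uniform-over-states reliability requirement $\sup_s\varepsilon_n(s)\to 0$ forces $\sup_s\Pr\{Z_{ns}\le R-\gamma\}\to 0$ and hence $R\le\sup_{\bX}\uuline{I}(\bX;\bY)$. (Note that the paper itself gives no inline proof of this theorem; it states that the proof, in the cited references \cite{Loyka-15}\cite{Loyka-15-2}, rests on ``the Verd\'u--Han and Feinstein Lemmas properly extended to the compound channel setting,'' which confirms that your converse matches the intended argument.)

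On achievability, however, the gap you flag at the end is genuine and is not a finishing detail --- it is the entire content of the direct part, and your random-coding route does not close it. From $\sup_s\mathbb{E}_{\sC}[\varepsilon_n(\sC,s)]\to 0$ you cannot extract one deterministic codebook with $\sup_s\varepsilon_n(\sC,s)\to 0$: the expectation and $\sup_s$ do not interchange, Markov's inequality applied per state gives a good codebook for each $s$ but possibly a different one for each, and a union bound over $\sS$ is unavailable since $\sS$ may be uncountable. The cited proof avoids this entirely by \emph{not} using random coding: it extends Feinstein's maximal (greedy) coding lemma to the compound setting. There, codewords are selected one at a time, each required to satisfy the threshold-decoding condition simultaneously for every $s$ (the receiver's state-dependent decoding regions make this feasible), and the counting argument that lower-bounds the size of the maximal code uses precisely the uniform quantity $\sup_s\Pr\{Z_{ns}\le R+\gamma\}$, yielding a single deterministic codebook with a per-state error bound of the form $\sup_s\Pr\{Z_{ns}\le R+\gamma\}+e^{-n\gamma}$ valid for all $s$ at once. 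Because the construction is deterministic from the outset, no derandomization step (and no measurable selection of a worst state) is ever needed. So your proposal should be judged incomplete: the converse half is correct and standard, but the ``compound Feinstein-type construction'' you defer to is the key lemma you would have to prove, and your confusion-term estimate (uniform in $s$ and $y^n$) is an ingredient of that lemma, not a substitute for it.
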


This theorem was proved in \cite{Loyka-15}\cite{Loyka-15-2} using the Verdu-Han and Feinstein Lemmas properly extended to the compound channel setting.

For future use, we need the following formal definitions, which extend the respective $\inf$ and $\sup$ operators introduced for regular (single-state) sequences \cite{Verdu}\cite{Han} (see \eqref{eq.oln-oln-Xs}) to the compound setting.
\begin{defn}
\label{defn.ooline.uuline}
Let $\{X_{sn}\}_{n=1}^{\infty}$ be an arbitrary compound random sequence where $s$ is a state (i.e. a random sequence indexed by the state $s$). The compound infimum $\uuline{\{\cdot\}}$ and supremum $\ooline{\{\cdot\}}$ operators are defined as follows:
\ba
\label{eq.comp.inf-sup}
\uuline{\bX}=\uuline{\{X_{sn}\}} &= \sup \bLF x: \lim_{n\rightarrow\infty} \sup_{s} \Pr\left\{ X_{sn} \le x \right\} =0 \bRF\\
\ooline{\bX}=\ooline{\{X_{sn}\}} &= \inf \bLF x: \lim_{n\rightarrow\infty} \sup_{s} \Pr\left\{ X_{sn} \ge x \right\} =0 \bRF
\ea
\end{defn}
Roughly, $\uuline{\bX}$ and $\ooline{\bX}$ represent the largest lower and least upper bounds to the asymptotic support set of $X_{sn}$ over the whole state set (note $\sup_s$ in the definitions).

The following definitions extend the respective information-theoretic quantities in \cite{Verdu}\cite{Han} to the compound setting.

\begin{defn}
Let $\bX=\{X_s^n\}_{n=1}^{\infty}$ and $\bY=\{Y_s^n\}_{n=1}^{\infty}$ be two compound random sequences with distributions $p_{s x^n}$ and $p_{s y^n}$ where $s$ is a state.
The compound inf-divergence rate is defined as
\ba
\label{eq.D(x||y)}
\uuline{D}&(\bX;\bY) = \uuline{\bLF d_{sn}(X_s^n||Y_s^n) \bRF }
\ea
where $d_{sn}(x^n||y^n) =\frac{1}{n} \log \frac{p_{s x^n}(x^n)}{p_{s y^n}(x^n)}$ is the divergence density rate.
The compound inf and sup-entropy rates $\uuline{H}(\bX)$ and $\ooline{H}(\bX)$ are defined as
\ba
\label{eq.H(x)}
\uuline{H}(\bX) = \uuline{\{h_{sn}(X_s^n)\}},\ \ooline{H}(\bX) = \ooline{\{h_{sn}(X_s^n)\}}
\ea
where $h_{sn}(x^n) = - n^{-1} \log p_{s}(x^n)$ is the entropy density rate.
The compound conditional inf-entropy rate $\uuline{H}(\bY|\bX)$ and sup-entropy rate $\ooline{H}(\bY|\bX)$ are defined analogously via the conditional entropy density rate $h_{sn}(y^n|x^n) = - n^{-1} \log p_{s}(y^n|x^n)$ (with respect to the joint distribution $p_{s}(x^n, y^n)$), and $\ooline{I}(\bX;\bY)$ is similarly defined.
\end{defn}

The proposition below gives the properties of the compound inf-information rate $\uuline{I}(\bX;\bY)$ and other relevant quantities \cite{Loyka-15}\cite{Loyka-15-2}, which will be instrumental below.

\begin{prop}
\label{prop.properties.I}
Let $\bX$ and $\bY$ be (arbitrary) compound random sequences. The following holds:
\ba
\label{eq.properties.1}
&\uuline{D}(\bX||\bY) \ge 0 \\
\label{eq.properties.2}
&\ooline{I}(\bX;\bY) \ge \uuline{I}(\bX;\bY) \ge 0 \\
\label{eq.properties.3}
&\uuline{I}(\bX;\bY) = \uuline{I}(\bY;\bX) \\
\label{eq.properties.4}
&\uuline{I}(\bX;\bY) \le \ooline{H}(\bY) - \ooline{H}(\bY|\bX)\\
\label{eq.properties.5}
&\uuline{I}(\bX;\bY) \le \uuline{H}(\bY) - \uuline{H}(\bY|\bX)\\
\label{eq.properties.6}
&\uuline{I}(\bX;\bY) \ge \uuline{H}(\bY) - \ooline{H}(\bY|\bX)\\
\label{eq.properties.4a}
&\ooline{H}(\bY) \ge \ooline{H}(\bY|\bX)\\
\label{eq.properties.5a}
&\ooline{H}(\bY) \ge \uuline{H}(\bY) \ge \uuline{H}(\bY|\bX)
\ea
If the alphabets are discrete, then
\ba
\label{eq.properties.8}
&0\le \uuline{H}(\bX|\bY) \le \uuline{H}(\bX) \le \ooline{H}(\bX) \le \log N_x\\
\label{eq.properties.9}\notag
&0\le \uuline{I}(\bX;\bY) \le \min\{\uuline{H}(\bX), \uuline{H}(\bY)\}\\
 &\qquad\qquad\qquad \le \min\{\log N_x, \log N_y\} \\
\label{eq.properties.9a}\notag
&\uuline{I}(\bX;\bY) = \min\{\uuline{H}(\bX), \uuline{H}(\bY)\}\\
 &\qquad\qquad \mbox{if}\ \min\{\ooline{H}(\bY|\bX), \ooline{H}(\bX|\bY)\}=0 \\
\label{eq.properties.10}\notag
&0\le \ooline{I}(\bX;\bY) \le \min\{\ooline{H}(\bX), \ooline{H}(\bY)\}\\
 &\qquad\qquad\qquad \le \min\{\log N_x, \log N_y\}
\ea
where the last inequalities in \eqref{eq.properties.8}-\eqref{eq.properties.10} hold if the alphabets are of finite cardinality $N_x, N_y$.
\end{prop}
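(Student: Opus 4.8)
The plan is to establish each inequality in Proposition~\ref{prop.properties.I} by reducing compound quantities, via the $\sup_s$ inside the defining probabilities, to the corresponding single-state arguments of \cite{Verdu}\cite{Han}, and then handling the state uniformly. The foundational fact I would record first is a pair of algebraic lemmas about the compound operators of Definition~\ref{defn.ooline.uuline}: (i) for any two compound sequences, $\uuline{\{A_{sn}+B_{sn}\}} \ge \uuline{\{A_{sn}\}} + \uuline{\{B_{sn}\}}$ and $\ooline{\{A_{sn}+B_{sn}\}} \le \ooline{\{A_{sn}\}} + \ooline{\{B_{sn}\}}$, with the mixed bound $\uuline{\{A_{sn}+B_{sn}\}} \le \uuline{\{A_{sn}\}} + \ooline{\{B_{sn}\}}$; and (ii) monotonicity, $\uuline{\{A_{sn}\}} \le \ooline{\{A_{sn}\}}$, together with the trivial bound $\uuline{\{A_{sn}\}} \ge 0$ whenever $A_{sn} \ge 0$ almost surely for all $s$. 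These follow from the union-bound / intersection arguments familiar from the single-state case applied inside $\sup_s \Pr\{\cdot\}$; the only care needed is that the limit superior of a sup over $s$ is superadditive in the right direction, which is where I expect the bookkeeping to be delicate.

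With those lemmas in hand, most of the listed inequalities are immediate. For \eqref{eq.properties.1} I would use the compound extension of the divergence nonnegativity: the single-state fact $\liminf_n \Pr\{d_{sn} \le -\gamma\} = 0$ follows from a Markov/Chebyshev argument on $\mathbb{E}[2^{-n d_{sn}}] \le 1$, and I would need this to hold \emph{uniformly} in $s$, i.e. $\sup_s \Pr\{d_{sn}(X_s^n\|Y_s^n) \le -\gamma\} \to 0$ for every $\gamma>0$, which gives $\uuline{D}(\bX\|\bY)\ge 0$. The decompositions \eqref{eq.properties.4}--\eqref{eq.properties.6} then come from writing the information density as $i_{sn}(x^n;y^n) = h_{sn}(y^n) - h_{sn}(y^n|x^n)$ and invoking the super/sub-additivity lemmas: \eqref{eq.properties.6} uses $\uuline{\{h(Y)-h(Y|X)\}} \ge \uuline{\{h(Y)\}} + \uuline{\{-h(Y|X)\}} = \uuline{H}(\bY) - \ooline{H}(\bY|\bX)$, and symmetrically for \eqref{eq.properties.4},\eqref{eq.properties.5}. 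The symmetry \eqref{eq.properties.3} is inherited from $i_{sn}(x^n;y^n)=i_{sn}(y^n;x^n)$ pointwise, so the defining probabilities coincide for every $s$ and $n$. Property \eqref{eq.properties.2} combines \eqref{eq.properties.1} applied to the product-vs-joint measures (giving $\uuline{I}\ge 0$) with the monotonicity lemma (giving $\ooline{I}\ge\uuline{I}$).

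For the discrete-alphabet bounds \eqref{eq.properties.8}--\eqref{eq.properties.10}, the key additional ingredient is that the entropy density $h_{sn}(x^n) = -n^{-1}\log p_s(x^n)$ is bounded: since each symbol takes at most $N_x$ values, $0 \le h_{sn}(x^n) \le \log N_x$ deterministically, hence $\uuline{H}(\bX),\ooline{H}(\bX)\in[0,\log N_x]$, and likewise $\uuline{H}(\bX|\bY)\ge 0$. Chaining the entropy inequalities \eqref{eq.properties.4a},\eqref{eq.properties.5a} with these deterministic bounds yields \eqref{eq.properties.8}. For \eqref{eq.properties.9} I would combine $\uuline{I}\ge 0$ with the two upper bounds $\uuline{I}(\bX;\bY)\le\uuline{H}(\bY)$ and (by symmetry) $\uuline{I}(\bX;\bY)\le\uuline{H}(\bX)$, each following from \eqref{eq.properties.5} together with $\uuline{H}(\cdot|\cdot)\ge 0$; the cardinality bound is then immediate. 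The conditional identity \eqref{eq.properties.9a} is the sharpest point: assuming $\ooline{H}(\bY|\bX)=0$, \eqref{eq.properties.6} gives $\uuline{I}\ge\uuline{H}(\bY)$, which combined with the upper bound forces equality $\uuline{I}=\uuline{H}(\bY)$, and the symmetric hypothesis gives the other branch.

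The main obstacle I anticipate is not any single inequality but the uniform-in-$s$ control required throughout: the single-state proofs of \cite{Verdu}\cite{Han} rely on $\liminf$/$\limsup$ of probabilities for a fixed measure, whereas here every bound must survive the extra $\sup_{s\in\sS}$ that appears \emph{inside} the probability in Definition~\ref{defn.ooline.uuline}. The superadditivity lemma for $\uuline{\{A_{sn}+B_{sn}\}}$ is the crux: to push a sum of thresholds through, I must show that if $\sup_s\Pr\{A_{sn}\le a\}\to 0$ and $\sup_s\Pr\{B_{sn}\le b\}\to 0$ then $\sup_s\Pr\{A_{sn}+B_{sn}\le a+b-\epsilon\}\to 0$, which follows from the elementary inclusion $\{A+B\le a+b-\epsilon\}\subseteq\{A\le a-\epsilon/2\}\cup\{B\le b-\epsilon/2\}$ applied uniformly over $s$ before taking $\sup_s$ and then $n\to\infty$. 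Once this uniform union bound is in place, all the additive manipulations go through verbatim from the single-state case, and the proof reduces to careful bookkeeping rather than any new idea.
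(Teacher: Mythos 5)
Your proposal is correct and follows essentially the same route as the paper's (Proposition \ref{prop.properties.I} is proved in the cited works \cite{Loyka-15}\cite{Loyka-15-2} by exactly this strategy): extend the single-state Verd\'u--Han information-spectrum arguments by carrying the union bound and the Markov bound $\Pr\{d_{sn}\le-\gamma\}\le 2^{-n\gamma}$ (which is state-independent, so uniformity in $s$ is automatic) through the $\sup_s$ inside the probability --- the same technique the paper itself employs in the Appendix for Proposition \ref{prop.sc.prop}. The few steps you leave implicit --- obtaining \eqref{eq.properties.4a} and \eqref{eq.properties.5a} from \eqref{eq.properties.4}, \eqref{eq.properties.5} together with $\uuline{I}(\bX;\bY)\ge 0$, and closing the chain $\uuline{H}(\bY)\le\uuline{I}(\bX;\bY)\le\min\{\uuline{H}(\bX),\uuline{H}(\bY)\}\le\uuline{H}(\bY)$ so that \eqref{eq.properties.9a} yields the stated minimum rather than just $\uuline{H}(\bY)$ --- are routine consequences of bounds you already establish.
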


Note that many of these properties mimic the respective properties of mutual information and entropy, e.g. "conditioning cannot increase the entropy" and "mutual information is non-negative, symmetric and bounded by the entropy of the alphabet".

%======================================================================================
\section{Capacity with Feedback}

In this section, we consider a discrete compound channel with feedback and general additive noise. Instead of dealing with the feedback channel $\{w^n y^{k-1}\} \rightarrow x^k \rightarrow y_k$, $k=1...n$, directly, one can consider an effective channel $w^n \rightarrow y^n$ without feedback, see Fig. \ref{fig.1}. Applying Theorem \ref{thm.C.general} to the effective channel, the capacity with the feedback can be expressed as\footnote{see also \cite{Tatikonda-09} for a formulation based on the directed information for the case of full CSI and a proof of equivalence of these two formulations in the latter case.}
\ba
\label{eq.C_FB}
C_{FB} = \sup_{\bW,\bF} \uuline{I}(\bW;\bY)
%\underset{=}{I}
\ea
where $\bY=\{Y^n\}_{n=1}^{\infty}$ and likewise for $\bW$, $\uuline{I}(\bW;\bY)$ is the compound inf-information rate:
\ba
\label{eq.uuline{I}}
\uuline{I}(\bW;\bY) = \uuline{\bLF n^{-1} i(W^n;Y^n|s)\bRF}
\ea
where $i(W^n;Y^n|s)$ is the information density:
\ba
i(w^n;y^n|s) = \log \frac{p_s(y^n|w^n)}{p_s(y^n)}
\ea
The maximization in \eqref{eq.C_FB} is over all possible encoding functions $\bF=\{f^n\}_{n=1}^{\infty}$ and all possible message distributions. Unfortunately, this maximization is difficult to perform in general. Therefore, we proceed in a different way. Let
\ba
\label{eq.ooline{H}}
\ooline{H}(\bXi) = \ooline{\bLF n^{-1} h(\Xi^n_s|s)\bRF}
\ea
be the compound sup-entropy rate of the compound noise $\bXi = \{\Xi^n_s\}_{n=1}^{\infty}$, $\Xi^n_s = \{\Xi_{1s}...\Xi_{ns}\}$, $h(\xi^n|s) = - \log p_s(\xi^n)$. The following is the main result of the paper.

\begin{thm}
\label{thm.C_FB=C_NFB}
The capacity of the compound discrete channel with (arbitrary) additive noise in \eqref{eq.ch.model} and the full Rx CSI is not increased by the causal feedback:
\ba
\label{eq.C_FB=C_NFB}
C_{FB} = C_{NFB} = \sup_{\bX} \uuline{I}(\bX;\bY) = \log M - \ooline{H}(\bXi)
\ea
where $C_{NFB}$ is the capacity without feedback, and $\sup_{\bX}$ is over all sequences of input distributions.
\end{thm}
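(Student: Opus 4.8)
The plan is to pin down the single number $\log M - \ooline{H}(\bXi)$ as simultaneously an \emph{upper} bound on $C_{FB}$ and a \emph{lower} bound on $C_{NFB}$, and then close the loop with the trivial inequality $C_{NFB} \le C_{FB}$ (a code that ignores the feedback is admissible in the feedback setting). Since $C_{NFB} = \sup_{\bX}\uuline{I}(\bX;\bY)$ by Theorem \ref{thm.C.general}, the chain $\log M - \ooline{H}(\bXi) \le C_{NFB} \le C_{FB} \le \log M - \ooline{H}(\bXi)$ then forces all four quantities to coincide, which is exactly the claim.

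For the upper bound I would work with the effective channel $\bW \to \bY$ and apply the entropy bound \eqref{eq.properties.4}, giving $\uuline{I}(\bW;\bY) \le \ooline{H}(\bY) - \ooline{H}(\bY|\bW)$ for every message distribution and encoder $\bF$. The first term satisfies $\ooline{H}(\bY) \le \log M$ by \eqref{eq.properties.8} since the output alphabet is $M$-ary. The crux is the identity $\ooline{H}(\bY|\bW) = \ooline{H}(\bXi)$. To obtain it, fix $w^n$ and $s$; by causality the encoder sets $x_k = f_k(w^n,y^{k-1})$, so $z_k = g_{sk}(x^k)$ is a deterministic function of $(w^n,y^{k-1})$ and $y_k = z_k + \xi_{ks}$ recovers $\xi_{ks}$ from $(w^n,y^k)$. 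Hence the map $\xi^n \leftrightarrow y^n$ is a bijection given $(w^n,s)$, so $p_s(y^n|w^n) = p_s(\xi^n|w^n) = p_s(\xi^n)$, where the last step uses that the message is chosen independently of the noise. This gives the pointwise identity $h(Y^n|W^n,s) = h(\Xi^n_s|s)$; taking the compound sup of Definition \ref{defn.ooline.uuline} yields $\ooline{H}(\bY|\bW) = \ooline{H}(\bXi)$, and maximizing over $\bW,\bF$ gives $C_{FB} \le \log M - \ooline{H}(\bXi)$.

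For the lower bound I would exhibit a single feedback-free input attaining the value, namely $X^n$ uniform over the $M^n$-point product alphabet. Because $g_s^n$ is a bijection, $Z^n = g_s^n(X^n)$ is then also uniform and, absent feedback, independent of $\Xi^n_s$. Using the group structure of the $M$-ary addition, $p_s(y^n) = M^{-n}\sum_{z^n} p_s(y^n - z^n) = M^{-n}$, so $Y^n$ is uniform for every $s$. Since the information density is invariant under the input bijection $x^n \leftrightarrow z^n$, this produces the pointwise identity $n^{-1} i(X^n;Y^n|s) = \log M - n^{-1} h(\Xi^n_s|s)$. Applying the compound inf operator together with the elementary rule $\uuline{\{c - B_{sn}\}} = c - \ooline{\{B_{sn}\}}$ (immediate from Definition \ref{defn.ooline.uuline}) yields $\uuline{I}(\bX;\bY) = \log M - \ooline{H}(\bXi)$, whence $C_{NFB} = \sup_{\bX}\uuline{I}(\bX;\bY) \ge \log M - \ooline{H}(\bXi)$.

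I expect the main obstacle to be the feedback identity $\ooline{H}(\bY|\bW) = \ooline{H}(\bXi)$: one must check that the causal, message-and-output-dependent encoding still leaves $\xi^n \mapsto y^n$ invertible for each fixed message, and --- more delicately --- that conditioning on $W^n$ leaves the noise statistics intact even though with feedback the input \emph{is} correlated with past noise. The separation of the two independence notions (message independent of noise, versus input independent of noise) is what makes the argument go through, and it is the point I would state and justify most carefully. The remaining steps are direct invocations of Proposition \ref{prop.properties.I} or the routine uniform-input computation, and the group structure of the alphabet under $+$ is the only structural assumption the lower bound genuinely requires.
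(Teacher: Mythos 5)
Your proposal is correct and takes essentially the same route as the paper: the sandwich $\log M - \ooline{H}(\bXi) \le C_{NFB} \le C_{FB} \le \log M - \ooline{H}(\bXi)$, with the upper bound obtained from $\uuline{I}(\bW;\bY)\le \ooline{H}(\bY)-\ooline{H}(\bY|\bW)$ and the identity $p_s(y^n|w^n)=p_{s\xi}(\xi^n)$, and achievability by the uniform input (the paper's Lemma \ref{lemma.z_k}, under which $Z^n$ and hence $Y^n$ are equiprobable). The only cosmetic difference is that you derive the key identity through the global bijection $\xi^n \leftrightarrow y^n$ given $(w^n,s)$, whereas the paper factors $p_s(y^n|w^n)$ symbol-by-symbol via the chain rule, replacing each factor $p_s(y_k|y^{k-1}w^n)$ by $p_{s\xi}(\xi_k|\xi^{k-1})$; both versions hinge on exactly the causality and message--noise-independence points you correctly single out as the delicate step.
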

\begin{proof}
Let us consider the no-feedback case first. The 2nd equality follow from Theorem 1. The following Lemma is needed to prove the last equality.
\begin{lemma}
\label{lemma.z_k}
Let $z_k = g_{sk}(x^k)$, $k=1...n$, and the mapping $x^n \rightarrow z^n$ is one-to-one. If $p(x^n)=1/M^n$, then $p(z^n)=1/M^n$, i.e. equiprobable $X^n$ generates equiprobable $Z^n$.
\end{lemma}
Now, since the mapping $x^n \rightarrow z^n$ is invertible, $\uuline{I}(\bX;\bY) = \uuline{I}(\bZ;\bY)$. It follows from \eqref{eq.properties.4} that
\ba
\uuline{I}(\bZ;\bY) \le \ooline{H}(\bY) - \ooline{H}(\bY|\bZ)
\ea
Using this inequality, one obtains:
\ba
\uuline{I}(\bZ;\bY) &\le \log M - \ooline{H}(\bY|\bZ)\\
&= \log M - \ooline{H}(\bXi)
\ea
where 1st inequality is due to $M$-ary alphabets, so that $\ooline{H}(\bY) \le \log M$ (see \eqref{eq.properties.8}), and the equality is due to $\ooline{H}(\bY|\bZ) = \ooline{H}(\bZ + \bXi|\bZ) = \ooline{H}(\bXi)$, since the noise is additive and independent of the input (recall that we consider the no feedback case). Finally,
\ba
\uuline{I}(\bX;\bY) \le \log M - \ooline{H}(\bXi)
\ea
and the equality is achieved by equiprobable input due to Lemma  \ref{lemma.z_k}, under which the output is also equiprobable. This proves the last equality in \eqref{eq.C_FB=C_NFB}.

To prove 1st equality, $C_{FB}=C_{NFB}$, observe that feedback cannot decrease the capacity,
\ba
\log M - \ooline{H}(\bXi) = C_{NFB} \le C_{FB}
\ea
To prove the converse,
\ba
\label{eq.C_FB<}
C_{FB} \le \log M - \ooline{H}(\bXi)
\ea
use \eqref{eq.C_FB} to conclude
\ba
C_{FB} &= \sup_{\bW,\bF} \uuline{I}(\bW;\bY)\\
&\le \sup_{\bW,\bF} [\ooline{H}(\bY) - \ooline{H}(\bY|\bW)]\\
\label{eq.C_FB<2}
&\le \log M - \inf_{\bW,\bF}\ooline{H}(\bY|\bW)
\ea
where 1st inequality is due to $\uuline{I}(\bW;\bY) \le \ooline{H}(\bY) - \ooline{H}(\bY|\bW)$  and 2nd inequality is due to $\ooline{H}(\bY) \le \log M$ (since the alphabet is $M$-ary).

To evaluate $\ooline{H}(\bY|\bW)$, note that
\ba
p_s(y^n|w^n) = \prod_{k=1}^{n} p_s(y_k|y^{k-1}w^n)
\ea
and
\ba
p_{s}(y_k|y^{k-1}w^n) &= p_{s}(y_k|y^{k-1} x^k w^n)\\
&= p_{s}(y_k|y^{k-1} x^k \xi^{k-1} w^n)\\
&= p_{sy}\left(g_{sk}(x^k) +\xi_k|x^k \xi^{k-1} w^n\right)\\
&= p_{s\xi}(\xi_k|\xi^{k-1} w^n)\\
&= p_{s\xi}(\xi_k|\xi^{k-1})
\ea
where $\xi_k = y_k - g_{sk}(x^k)$, $x^k=f^k(w^n y^{k-1})$, $\xi^n = \{\xi_k\}_{k=1}^n$. 1st equality is due to $x^k = f^k( w^n y^{k-1})$; 2nd and 3rd equalities are due to the channel model $y_k= z_k + \xi_k$; 4th equality is due to $x^k = \check{f}^k(w^n \xi^{k-1})$, where $\check{f}^k$ is a function which depends on encoding functions $f^k$ and the channel impulse response functions $g_s^k$; last equality is due to independence of noise and message. Thus,
\ba
\label{eq.p(sy)=p(xi)}
p_{sy}(y^n|w^n) = p_{s\xi}(\xi^n)
\ea
and therefore
\ba
%\label{eq.C_FB<}
\ooline{H}(\bY|\bW) = \ooline{H}(\bXi)
\ea
Combining this with \eqref{eq.C_FB<2}, one obtains \eqref{eq.C_FB<} and hence the desired result follows.

Equality in \eqref{eq.C_FB<} is achieved by the uniform input distribution $p(x^n)=1/M^n$, which is also i.i.d. and equiprobable: $p(x^n)=\prod_{i=1}^n p(x_i)$, $p(x_i)=1/M$ (this can be shown by induction), under which the output is also uniform.
\end{proof}

\begin{remark}
Note that in both feedback and no-feedback systems, the optimizing input is uniform and hence i.i.d. equiprobable and independent of the feedback. Under this input, the output is also uniform under \textit{any} noise, which explains why feedback is not helpful in this setting.
\end{remark}

\begin{remark}
Setting $l_s=0$ in \eqref{eq.ch.model.z}, one obtains a channel without intersymbol interference. When, in addition, the uncertainty set $\mathcal{S}$ is singleton (single-state channel with no uncertainty), Theorem \ref{thm.C_FB=C_NFB} above reduces to the corresponding result in \cite{Alajaji-94} obtained for fully known (no uncertainty) channels.
\end{remark}

\begin{remark}
Since noisy feedback cannot perform better than noiseless, this result also implies that noisy feedback cannot increase the compound capacity in this setting either.
\end{remark}

\begin{remark}
One may consider a more general feedback of the form $u_k=\beta_k(y^k)$, where $\{\beta_k\}$ are arbitrary (possibly random) feedback functions (which account for e.g. quantization of feedback signals and noise in the feedback channel), and the corresponding encoding of the form $x_k = f_k(w^n u^{k-1})$. Since the capacity with this form of feedback cannot exceed the capacity with the full feedback of $y^{k-1}$, Theorem \ref{thm.C_FB=C_NFB} still holds for this setting as well.
\end{remark}

%======================================================================================
\section{Impact of the Tx CSI and Saddle Point}

Let us consider the case where channel state $s$ is known at the transmitter, so that codewords can be selected as functions of the channel state. In this case, the worst-case channel capacity $C_w$ is a proper performance metric and it can be expressed as
\ba
\label{eq.Cw.1}
C_w &= \inf_s \sup_{\bW, \bF} \underline{I}(\bW;\bY|s)\\
\label{eq.Cw.2}
&= \inf_s (\log M - \overline{H}(\bXi|s))\\
\label{eq.Cw.2a}
&= \inf_s \sup_{\bX} \underline{I}(\bX;\bY|s)\\
\label{eq.Cw.3}
&= \log M - \sup_s \overline{H}(\bXi|s)\\
\label{eq.Cw.4}
&\ge \log M - \ooline{H}(\bXi) = C_{FB}
\ea
where $\underline{I}(\bX;\bY|s)$ is the inf-information rate under channel state $s$ \cite{Verdu}:
\ba
\label{eq.uline{I}}
\underline{I}(\bX;\bY|s) = \sup_{R}\bLF R: \lim_{n\rightarrow\infty}  \Pr\left\{\frac{1}{n} i(W^n;Y^n|s) \le R \right\} =0 \bRF,
\ea
$\overline{H}(\bXi|s)$ is the sup-entropy rate of the noise under state $s$:
\ba
\label{eq.ooline{H}}
\overline{H}(\bXi|s) = \inf_{R}\bLF R: \lim_{n\rightarrow\infty} \Pr\left\{\frac{1}{n} h(\Xi^n_s|s) > R \right\} =0 \bRF
\ea
\eqref{eq.Cw.1} follows from the general formula in \cite{Verdu} and the equivalent channel in Fig. \ref{fig.1}; \eqref{eq.Cw.2} follows from the Theorem in \cite{Alajaji-94}; \eqref{eq.Cw.4} follows from the Lemma \ref{lemma.ooH-supHs} below, so that the impact of Tx CSI can be characterized by
\ba
\label{eq.DeltaC}
\Delta C = C_w - C_{FB} = \ooline{H}(\bXi) - \sup_s \overline{H}(\bXi|s) \ge 0
\ea

Note that, similarly to the compound capacity, $C_w$ is not increased by the feedback either, i.e. \eqref{eq.Cw.3} is also the no-feedback worst-case channel capacity as indicated by \eqref{eq.Cw.2a} while $C_s= \sup_{\bX} \underline{I}(\bX;\bY|s)$ is the channel capacity under state $s$ known to both Tx and Rx.

To proceed further, we need the following definition.

\begin{defn}
The compound noise sequence $\{\Xi_s^n\}_{n=1}^{\infty}$ is uniform if the convergence in
\ba
\label{eq.def.unif}
\Pr\left\{\frac{1}{n} h(\Xi^n_s|s) > \sup_s \overline{H}(\bXi|s) +\delta \right\} \rightarrow 0
\ea
as $n \rightarrow\infty$ is uniform in $s \in \sS$ for any $\delta>0$.
\end{defn}

Note that, while the convergence to zero in \eqref{eq.def.unif} for each $\delta>0$ and $s \in \sS$ is guaranteed from the definition of $\sup_s \overline{H}(\bXi|s)$, this convergence does not have to be uniform in general. In fact, the uniform convergence requirement above is equivalent to
\ba
\label{eq.def.unif-2}
\lim_{n\rightarrow\infty} \sup_s \Pr\left\{\frac{1}{n} h(\Xi^n_s|s) > \sup_s \overline{H}(\bXi|s) +\delta \right\} = 0
\ea
for any $\delta>0$, which is clearly stronger than just point-wise convergence in \eqref{eq.def.unif} for each $s$, which is equivalent to
\ba
\label{eq.def.unif-3}
\sup_s  \lim_{n\rightarrow\infty} \Pr\left\{\frac{1}{n} h(\Xi^n_s|s) > \sup_s \overline{H}(\bXi|s) +\delta \right\} = 0
\ea
In general, $\lim$ and $\sup$ cannot be swaped; rather
\ba
\label{eq.suplim<=limsup}
\sup_s  \lim_{n\rightarrow\infty} \{\cdot\} \le \lim_{n\rightarrow\infty} \sup_s \{\cdot\}
\ea
so that \eqref{eq.def.unif-2} implies \eqref{eq.def.unif-3} but the converse is not true in general, i.e. the inequality can be strict.

We are now in a position to establish the following key result.

\begin{lemma}
\label{lemma.ooH-supHs}
The following inequality holds for the general compound noise sequence:
\ba
\label{eq.ooH-supHs}
\ooline{H}(\bXi) \ge \sup_s \overline{H}(\bXi|s)
\ea
with equality if and only if the compound noise is uniform.
\end{lemma}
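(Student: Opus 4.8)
The plan is to work directly from Definition~\ref{defn.ooline.uuline}. Writing $X_{sn}=n^{-1}h(\Xi^n_s|s)$ and $B=\sup_s \overline{H}(\bXi|s)$, the quantity $\ooline{H}(\bXi)$ is the infimum of the set
\[
\mathcal{X}=\left\{x:\ \lim_{n\rightarrow\infty}\sup_s \Pr\{X_{sn}\ge x\}=0\right\},
\]
whereas $\overline{H}(\bXi|s)$ is the infimum of $\{R:\ \lim_n \Pr\{X_{sn}>R\}=0\}$ for each fixed $s$. I would first prove the inequality \eqref{eq.ooH-supHs}, and then characterise equality through the order structure of $\mathcal{X}$.

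For the inequality, fix any $x\in\mathcal{X}$ and any state $s$. Since $\Pr\{X_{sn}\ge x\}\le \sup_{s'}\Pr\{X_{s'n}\ge x\}$, membership $x\in\mathcal{X}$ forces $\lim_n \Pr\{X_{sn}\ge x\}=0$; because $\{X_{sn}>x\}\subseteq\{X_{sn}\ge x\}$ this also gives $\lim_n \Pr\{X_{sn}>x\}=0$, so $x$ belongs to the defining set of $\overline{H}(\bXi|s)$ and hence $x\ge \overline{H}(\bXi|s)$. As $s$ is arbitrary, $x\ge B$; taking the infimum over $x\in\mathcal{X}$ yields $\ooline{H}(\bXi)\ge B$.

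The equality part rests on the observation that $\mathcal{X}$ is upward closed: if $x\in\mathcal{X}$ and $x'>x$ then $\{X_{sn}\ge x'\}\subseteq\{X_{sn}\ge x\}$ for every $s$, so $x'\in\mathcal{X}$; consequently every $x>\ooline{H}(\bXi)$ lies in $\mathcal{X}$. For the ``if'' direction, assume the noise is uniform, i.e. \eqref{eq.def.unif-2} holds for every $\delta>0$. Given $\delta>0$, the inclusion $\{X_{sn}\ge B+2\delta\}\subseteq\{X_{sn}>B+\delta\}$ shows $\sup_s\Pr\{X_{sn}\ge B+2\delta\}\to 0$, so $B+2\delta\in\mathcal{X}$ and $\ooline{H}(\bXi)\le B+2\delta$; letting $\delta\downarrow 0$ gives $\ooline{H}(\bXi)\le B$, which together with the inequality already shown forces equality. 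For the ``only if'' direction, assume $\ooline{H}(\bXi)=B$. Then for any $\delta>0$ the value $B+\delta/2$ exceeds $\ooline{H}(\bXi)$, so by upward closedness $B+\delta/2\in\mathcal{X}$, i.e. $\sup_s\Pr\{X_{sn}\ge B+\delta/2\}\to 0$; since $\{X_{sn}>B+\delta\}\subseteq\{X_{sn}\ge B+\delta/2\}$ this yields \eqref{eq.def.unif-2}, establishing uniformity.

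I expect the only delicate point to be the bookkeeping between the strict inequality ($>R$) used in $\overline{H}(\bXi|s)$ and the non-strict one ($\ge x$) used in the compound operator $\ooline{\{\cdot\}}$; inserting the small gaps $2\delta$ and $\delta/2$ is exactly what reconciles the two and converts $\Pr\{\cdot\ge\cdot\}$ into $\Pr\{\cdot>\cdot\}$ and back. Conceptually, the heart of the equality statement is the failure of $\lim$ and $\sup_s$ to commute noted in \eqref{eq.suplim<=limsup}: the gap $\ooline{H}(\bXi)-B$ measures precisely the extent to which $\lim_n\sup_s$ exceeds $\sup_s\lim_n$ of the relevant tail probabilities, and uniform convergence is the condition that closes it.
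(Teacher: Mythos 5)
Your proof is correct, and while it uses the same raw ingredients as the paper's (unpacking the definitions and comparing tail probabilities), it is organized quite differently. The paper proves both the inequality and the ``if'' direction by contradiction: it assumes the opposite ordering, inserts the midpoint threshold $R=(\ooline{H}+\overline{H})/2$, and plays the definition of $\ooline{H}(\bXi)$ (vanishing $\sup_s$ tail above $\ooline{H}+\Delta$) against the definition of $\overline{H}(\bXi|s_0)$ (non-vanishing tail below $\overline{H}-\Delta$ for a witness state $s_0$); its ``only if'' step then reads uniform convergence off the definition of $\ooline{H}$, essentially as you do. Your version instead argues directly through the order structure of the defining set $\mathcal{X}$: monotonicity of $\Pr\{X_{sn}\ge x\}$ in $s$ gives \eqref{eq.ooH-supHs} in two lines, and upward closedness of $\mathcal{X}$ with the $2\delta$ and $\delta/2$ gaps settles both directions of the equality case. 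This buys two small but genuine advantages. First, you never need the paper's step ``$R$ below the infimum implies $\lim_n\Pr\{X_{sn}>R\}>0$''; strictly speaking that limit need not exist (only its limsup is positive), so your membership-only argument is more robust as written. Second, your explicit bookkeeping between the strict tail $\Pr\{\cdot>R\}$ in the definition of $\overline{H}(\bXi|s)$ and the non-strict tail $\Pr\{\cdot\ge x\}$ in the compound operator makes rigorous a detail the paper leaves implicit. One pedantic caveat: your claim that every $x>\ooline{H}(\bXi)$ lies in $\mathcal{X}$ combines upward closedness with the infimum property and so tacitly assumes $\mathcal{X}\neq\emptyset$; this is automatic here, since for $M$-ary noise any $x>\log M$ satisfies $\sup_s\Pr\{X_{sn}\ge x\}\le M^{n}e^{-nx}\rightarrow 0$ and hence lies in $\mathcal{X}$. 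Your closing observation that the gap $\ooline{H}(\bXi)-\sup_s\overline{H}(\bXi|s)$ measures exactly the failure of $\lim_n$ and $\sup_s$ to commute, cf.~\eqref{eq.suplim<=limsup}, coincides with the paper's own interpretation of the uniformity condition.
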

\begin{proof}
See Appendix.
\end{proof}

Strict inequality in \eqref{eq.ooH-supHs} can be demonstrated by examples - see Section \ref{sec.Examples}. Combining Lemma \ref{lemma.ooH-supHs} with \eqref{eq.DeltaC}, one obtains the following result.

\begin{thm}
Consider the discrete compound channel with additive noise as in \eqref{eq.ch.model} under the full Rx CSI. When the compound noise is uniform, neither the full Tx CSI nor causal noiseless or noisy feedback increase its capacity, i.e.
\ba
C_{NFB} = C_{FB} = C_w
\ea
\end{thm}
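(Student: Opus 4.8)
The plan is to obtain the statement as an immediate corollary of Theorem \ref{thm.C_FB=C_NFB} and Lemma \ref{lemma.ooH-supHs}, so that essentially no new analytic work is needed beyond assembling already-established facts. The first equality $C_{NFB}=C_{FB}$ requires nothing further: Theorem \ref{thm.C_FB=C_NFB} already shows that both quantities equal $\log M - \ooline{H}(\bXi)$ under the full Rx CSI, with or without noiseless feedback, and the noisy-feedback case is subsumed since noisy feedback cannot outperform noiseless feedback. Hence the only substantive task is to verify the second equality $C_{FB}=C_w$ under the uniformity hypothesis.

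For this I would start from the characterization $C_w = \log M - \sup_s \overline{H}(\bXi|s)$ in \eqref{eq.Cw.3} and the gap expression $\Delta C = C_w - C_{FB} = \ooline{H}(\bXi) - \sup_s \overline{H}(\bXi|s) \ge 0$ in \eqref{eq.DeltaC}, whose non-negativity is precisely the inequality half of Lemma \ref{lemma.ooH-supHs}. Invoking the equality condition of that lemma, namely that $\ooline{H}(\bXi) = \sup_s \overline{H}(\bXi|s)$ holds if and only if the compound noise is uniform, the uniformity assumption forces $\Delta C = 0$ and hence $C_w = C_{FB}$. Chaining this with the first equality yields $C_{NFB}=C_{FB}=C_w$, and the extension to the full Tx CSI is exactly the content of $C_w=C_{FB}$, since the worst-case capacity $C_w$ is the proper metric when codewords may depend on the state.

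The genuine content lies not in this assembly but in the two results it rests on. The heavy lifting behind the first equality is the converse of Theorem \ref{thm.C_FB=C_NFB}, which uses the additive-noise identity $p_{sy}(y^n|w^n)=p_{s\xi}(\xi^n)$ in \eqref{eq.p(sy)=p(xi)} to collapse $\ooline{H}(\bY|\bW)$ to $\ooline{H}(\bXi)$ irrespective of the feedback-dependent encoding. The crux of the second equality is the equality condition in Lemma \ref{lemma.ooH-supHs}, which amounts to validating the interchange of $\lim_{n\to\infty}$ and $\sup_s$ in \eqref{eq.def.unif-2}. By \eqref{eq.suplim<=limsup} this interchange is in general only an inequality, and it is exactly the uniform-convergence notion of a uniform compound noise that closes the gap. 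If I were establishing the final statement from scratch without Lemma \ref{lemma.ooH-supHs}, this $\lim/\sup$ interchange would be the main obstacle; I would attack it by using the uniform tail control of \eqref{eq.def.unif-2} to bound $\sup_s \Pr\{\cdot\}$ and thereby show that the compound-sup threshold $\ooline{H}(\bXi)$ cannot strictly exceed the state-wise supremum $\sup_s \overline{H}(\bXi|s)$.
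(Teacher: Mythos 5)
Your proposal is correct and follows essentially the same route as the paper: the paper also obtains this theorem by combining the gap expression $\Delta C = \ooline{H}(\bXi) - \sup_s \overline{H}(\bXi|s)$ from \eqref{eq.DeltaC} (itself a consequence of Theorem \ref{thm.C_FB=C_NFB} and \eqref{eq.Cw.3}) with the equality condition of Lemma \ref{lemma.ooH-supHs} under the uniformity hypothesis. Your identification of where the real work lies --- the additive-noise collapse $\ooline{H}(\bY|\bW)=\ooline{H}(\bXi)$ via \eqref{eq.p(sy)=p(xi)} and the $\lim$/$\sup$ interchange closed by uniform convergence --- accurately reflects the paper's supporting arguments.
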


The last equality states that the worst-case channel capacity (achievable by codebooks tailored to the channel state) is the same as the compound channel capacity (where the codebooks are independent of channel states), which can be equivalently expressed as
\ba
\inf_s \sup_{\bX} \underline{I}(\bX;\bY|s) = \sup_{\bX} \inf_s \underline{I}(\bX;\bY|s)
\ea
so that, when $\inf$ and $\sup$ are achieved, this is equivalent to the existence of a saddle point \cite{Boyd-04}\cite{Zeidler-86}:
\ba
\underline{I}(\bX;\bY|s^*)\le \underline{I}(\bX^*;\bY^*|s^*)\le \underline{I}(\bX^*;\bY^*|s)
\ea
where $(\bX^*,s^*)$ is the saddle point. This saddle point exists for both feedback and no feedback cases when the compound noise is uniform.

It is remarkable that a saddle point exists even though the uncertainty set is allowed to be non-convex and the objective function $f(s)\triangleq \underline{I}(\bX;\bY|s)$ is not required to be convex either (e.g. when $s$ is discrete, $f(s)$ is not convex; it can also be non-convex even when the uncertainty set is convex), so that von Neumann's mini-max Theorem \cite{Boyd-04} or its extensions \cite{Zeidler-86} can not be used to establish the existence of a saddle point.

The saddle point above extends the information-theoretic saddle-point results established earlier in e.g. \cite{Dobrushin-59-2}-\cite{Loyka-15-3} for stationary and ergodic (and hence information-stable) channels, for which mutual information is a proper metric, to the realm of information-unstable scenarios, where mutual information has no operational meaning and the inf-information rate has to be used instead. Furthermore, it demonstrates that neither convexity of the feasible set nor of the objective function are necessary for the saddlepoint to exist. It also has the standard game-theoretic interpretation: neither the nature (who controls state $s$) nor the transmitter (who controls the input distribution) can deviate from the optimal strategy without incurring a penalty.

%======================================================================================
\section{Examples}
\label{sec.Examples}

In this section, we consider some illustrative examples. Among other things, they identify the scenarios when the Tx CSI increases the capacity and when it does not.

%======================================================================================
\subsection{Example 1}
Let the compound noise be of the form
\ba
\label{eq.xi.E1}
\xi_{s}^n = \{w_1,..,w_s,0..0\}
\ea
where $W_i$ are i.i.d. equiprobable so that $p(w^s)=1/M^s$, and $s \in \{1,2,...\}$. This can model block interference/noise of length $s$. Note that the noise process $\{\xi_{s}^n\}$ is not stationary. Using \eqref{eq.ooline{H}}, one obtains, after some manipulations, $\overline{H}(\bXi|s)= 0 \ \forall s$ (this is due to the fact that, under fixed $s$, the "noisy" part in \eqref{eq.xi.E1} is asymptotically negligible) so that $\sup_s \overline{H}(\bXi|s)= 0$ and hence
\ba
C_w = \log M
\ea
Yet, using \eqref{eq.ooline{H}}, it follows, after some manipulations, that $\ooline{H}(\bXi) = \log M> \sup_s \overline{H}(\bXi|s)= 0$. Hence, the noise is not uniform and
\ba
C_{FB} = 0
\ea
so that the advantage of the Tx CSI is significant: $\Delta C = \log M$, i.e. the maximum possible value for $M$-ary alphabet. The reason for this is that the compound noise in \eqref{eq.xi.E1} is not uniform, the worst-case noise (corresponding to $\sup_s$ in \eqref{eq.comp.inf-sup}) is i.i.d. equiprobable for any given $n$ and hence the compound channel is useless, even under noiseless causal feedback. The presence of the Tx CSI changes the situation dramatically: one can now design a codebook for any given state $s$ and make the error probability arbitrary low by using sufficiently large blocklength $n \gg s$. This conclusion also holds for \textit{any} distribution of $w_s^n$, not only i.i.d. equiprobable, since $\sup_s \overline{H}(\bXi|s)= 0$ regardless.

The situation also changes dramatically if one imposes the boundedness constraint on the uncertainty set: $s \le S < \infty$. In this case, $\ooline{H}(\bXi) = \sup_s \overline{H}(\bXi|s)= 0$, i.e. the noise becomes uniform, and hence $C_{FB}=C_w= \log M$. One may wonder as to what are the practical implications of these dramatic changes. In our view, the first case of unbounded $s$ corresponds to a scenario where the interference is more powerful than the codebook, i.e. for any given $n$, does not matter how large, one can always find powerful enough interference with $s=n$ thus rendering the channel useless. The second case of bounded $s$ prevents this thus allowing for the codeword length $n$ to be much larger than $S$ and hence represents a scenario where the codebook is more powerful than any possible interference. In the same way, one can interpret the impact of the Tx CSI: giving $s$ to the Tx allows one to chose $n \gg s$ and hence make the impact of interference negligible, which is not possible otherwise.

%======================================================================================
\subsection{Example 2}
Let us now set the compound noise as
\ba
\label{eq.xi.E2}
\xi_{s}^n = \{w_1,..,w_s,z_{s+1},..,z_n\}
\ea
with binary alphabet and $W_i \sim \textsf{Ber}(p_1)$, $Z_i \sim \textsf{Ber}(p_2)$, i.e. Bernoulli random variables, all independent of each other and $0\le p_2 < p_1 \le 1/2$, and $s \in \{1,2,...\}$. This can model a scenario where there is noise (2nd part) in addition to interference (1st part).

One obtains, after some manipulations,
\ba
\sup_s \overline{H}(\bXi|s)= h(p_2) < h(p_1)=\ooline{H}(\bXi)
\ea
where $h(p)$ is the binary entropy function, and hence
\ba
\Delta C = C_w - C_{FB} = h(p_1) - h(p_2) >0
\ea
so that the noise is not uniform and the Tx CSI does bring in advantage. Bounding the uncertainty set $s\le S< \infty$ has no impact on $\overline{H}(\bXi|s)$ but makes $\ooline{H}(\bXi) = h(p_2)$ and hence the advantage of the Tx CSI disappear: $\Delta C=0$. The noise becomes uniform in this case. As in Example 1, the distribution of $w^s$ does not affect $\overline{H}(\bXi|s)$ but does have an impact on $\ooline{H}(\bXi)$.

If, on the other hand, $p_2 \ge p_1$, then $\Delta C = 0$ regardless whether the uncertainty set is bounded or not, so that, in general,
\ba
\Delta C = C_w - C_{FB} = [h(p_1) - h(p_2)]_+
\ea
where $[x]_+=\max\{0,x\}$.

%======================================================================================
\subsection{Example 3}
Let the compound noise sequence be of the form
\ba
\label{eq.xi.E3}
\xi_{s}^n = \{w_1,..,w_n\}
\ea
with binary alphabet and $W_i \sim \textsf{Ber}(p_i)$ and independent of each other, where
\ba
\label{eq.pi}
p_i = \frac{s}{2(i+s)}
\ea
and $s\ge 0$ (not necessarily integer). This models a scenario where noise becomes "weaker" with time (note that $h(p_i)$ decreases with $i$), while $s$ controls the decay rate: noise becomes negligible when $i\gg s$, so that $h(p_i) \approx 0$, see Fig. \ref{fig.hpi}. The process is clearly not stationary.

\begin{figure}[t]%[htbp]
\centerline{\includegraphics[width=4in]{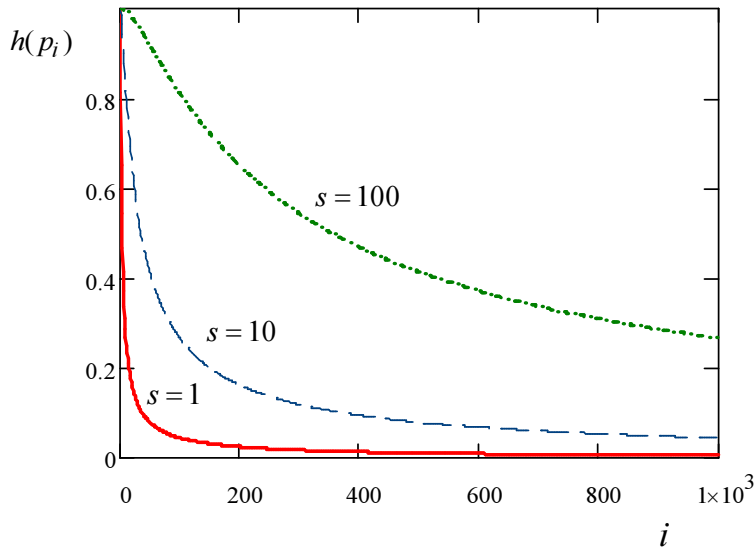}}
\caption{Binary entropy $h(p_i)$ for $p_i$ as in \eqref{eq.pi} versus time $i$ for different states $s$.}
\label{fig.hpi}
\end{figure}

After some manipulations, one obtains $\overline{H}(\bXi|s) = 0\ \forall\ s$ and hence $\sup_s\overline{H}(\bXi|s)=0$. Yet, $\ooline{H}(\bXi)=1$ so that $C_w=1$, $C_{FB}=0$  and $\Delta C= 1$, the maximal possible value, and the noise in not uniform. Thus, while the Tx CSI is the most useful, the noiseless causal feedback is useless.

As above, bounding the uncertainty set $s\le S< \infty$ changes the situation dramatically: $C_{FB}=1$, $\Delta C =0$, so that the Tx CSI gives no increase in the capacity since the noise is now uniform.

%======================================================================================
\subsection{Example 4}
Let us now consider a non-ergodic non-stationary channel with
\ba
\label{eq.xi.E4}
\xi_{s}^n = \begin{cases}
                w_s^n & \text{with}\ \Pr = p\\
                z_s^n & \text{with}\ \Pr = 1-p
            \end{cases}
\ea
where $W_s^n, Z_s^n$ are non-stationary processes (e.g. from the above examples) and $0<p<1$, i.e. one of the two processes is randomly selected at the beginning and it operates during the entire transmission. This process is clearly non-ergodic when its components are of different distributions. One may also consider $p_s$, i.e. a function of the channel state, provided that $0<\alpha \le p_s \le \beta <1\ \forall s$.

It can be seen that
\ba
\overline{H}(\bXi|s)= \max\{\overline{H}(\bW|s), \overline{H}(\bZ|s)\},\ \ooline{H}(\bXi)= \max\{\ooline{H}(\bW), \ooline{H}(\bZ)\}
\ea
so that
\ba
\Delta C = \max\{\ooline{H}(\bW), \ooline{H}(\bZ)\}- \sup_s \max\{\overline{H}(\bW|s), \overline{H}(\bZ|s)\}
\ea
In particular, $\Delta C=0$ if $\{W_s^n\}_{n=1}^{\infty}, \{Z_s^n\}_{n=1}^{\infty}$ are uniform compound sequences. This holds if e.g. the uncertainty set is of a finite cardinality, regardless of what the distributions of $\{W_s^n\}, \{Z_s^n\}$ are.

%\newpage
%======================================================================================
\section{Strong Converse}

In this section, we establish a sufficient and necessary condition for the strong converse to hold for the compound channel with additive noise. In addition to being of theoretical interest on its own, it also has some practical implications. In particular, strong converse ensures that slightly larger error probability cannot be traded off for higher data rate, since the transition from arbitrary low to high error probability is sharp. Additionally, a consequence of this is that the error rate performance degrades dramatically if the SNR drops below the threshold for which the system was designed.

Let $\varepsilon_n$ and $r_n$ be the error probability and rate of a codebook of blocklength $n$. The formal definition of strong converse is as follows.

\begin{defn}
A compound channel is said to satisfy strong converse if
\ba
\label{eq.strong.conv.d1}
\lim_{n\rightarrow\infty} \varepsilon_n = 1
\ea
for any code satisfying
\ba
\label{eq.strong.conv.d2}
\liminf_{n\rightarrow\infty} r_n > C_c
\ea
\end{defn}

We begin with the following definitions which are needed below. 1st one extends the standard definition of convergence in probability to compound random sequences.

\begin{defn}
A compound random sequence $\{Y_{sn}\}_{n=1}^{\infty}$ is said to converge in probability to $y_0$, denoted as $Y_{sn} \overset{\Pr} {\rightarrow} y_0$, if
\ba
\label{eq.uni.conv.def}
\lim_{n\rightarrow\infty}\sup_s \Pr\{|Y_{sn}-y_0|>\epsilon\}=0
\ea
for any $\epsilon>0$, where $\sup_s$ is over the whole state set.
\end{defn}

It should be emphasized that the point-wise convergence, i.e. $\lim_{n\rightarrow\infty} \Pr\{|Y_{sn}-y_0|>\epsilon\}=0\ \forall s$, does not imply \eqref{eq.uni.conv.def}, which is a stronger condition (see also \eqref{eq.suplim<=limsup}).

In addition to the following standard definitions of the infimum $\uln{\bX}_s$ and supremum $\oln{\bX}_s$ of a random sequence $X_s^n$ under state $s$ \cite{Verdu}\cite{Han}:
\ba\notag
\label{eq.oln-oln-Xs}
\uln{\bX}_s &= \sup \bLF x: \lim_{n\rightarrow\infty} \Pr\left\{X_{sn} \le x \right\} =0 \bRF\\
\oln{\bX}_s &= \inf \bLF x: \lim_{n\rightarrow\infty} \Pr\left\{X_{sn} \ge x \right\} =0 \bRF
\ea
and the compound infimum $\uuline{\bX}$ and supremum $\ooline{\bX}$ in Definition \ref{defn.ooline.uuline}, the following compound $\inf$ and $\sup$ operators are needed in a condition for strong converse.
\begin{defn}
\label{defn.cis.oper}
Let $\{X_{sn}\}_{n=1}^{\infty}$ be an arbitrary compound random sequence where $s$ is a state. The compound infimum $\uln{\{\cdot\}}$ and supremum $\oln{\{\cdot\}}$ operators are defined as follows:
\ba
\label{eq.comp.inf-sup.BW}
\uln{\bX}=\uln{\{X_{sn}\}} &= \sup \bLF x: \lim_{n\rightarrow\infty} \inf_{s} \Pr\left\{ X_{sn} \le x \right\} =0 \bRF\\
\oln{\bX}=\oln{\{X_{sn}\}} &= \inf \bLF x: \lim_{n\rightarrow\infty} \inf_{s} \Pr\left\{ X_{sn} \ge x \right\} =0 \bRF
\ea
\end{defn}
Roughly, $\uln{\bX}_s$ and $\oln{\bX}_s$  represent the largest lower and least upper bounds of the asymptotic support set of $X_{sn}$ under state $s$ while $\uln{\bX}$ and $\oln{\bX}$ do so over the whole state set by selecting the best states for the respective bounds. Note however that these quantities are different from $\uuline{\bX}$ and $\ooline{\bX}$:  $\inf$ rather than $\sup$ are used in the definitions of $\uln{\bX}$ and $\oln{\bX}$ so that the respective limits are enforced for some channel states only, not over the whole state set. While subtle, the difference is important, as we will see below. These operators have the properties which are instrumental in establishing the strong converse and other results.

\begin{prop}
\label{prop.sc.prop}
The compound $\inf$ and $\sup$ operators in Definition \ref{defn.cis.oper} satisfy the following:
\ba
\label{eq.sc.P1}
&\uln{(-\bX)} = -\oln{\bX}\\
\label{eq.sc.P2}
&\oln{\bX}+\uuline{\bY} \le \oln{(\bX+\bY)} \le \oln{\bX}+\ooline{\bY}\\
\label{eq.sc.P3}
&\uuline{\bX} \le \min\{\uln{\bX},\oln{\bX}\} \le \max\{\uln{\bX},\oln{\bX}\} \le \ooline{\bX}\\
\label{eq.sc.P4}
&\sup_s \underline{\bX}_s \le \uln{\bX},\ \oln{\bX} \le \inf_s \overline{\bX}_s
\ea
If $Y_{sn} \overset{\Pr} {\rightarrow} y_0$, then
\ba
\label{eq.sc.P5}
\oln{(\bX+\bY)} = \oln{\bX}+y_0
\ea
\end{prop}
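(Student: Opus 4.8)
The plan is to establish the five properties in the order \eqref{eq.sc.P1}, \eqref{eq.sc.P3}, \eqref{eq.sc.P4}, \eqref{eq.sc.P2}, \eqref{eq.sc.P5}, since the later ones lean on the earlier. Property \eqref{eq.sc.P1} is a pure change-of-variable identity: writing out the definition of $\uln{(-\bX)}$ and using $\Pr\{-X_{sn}\le x\}=\Pr\{X_{sn}\ge -x\}$ turns the defining $\sup$ over $x$ into $-\inf$ over $-x$ of exactly the set defining $\oln{\bX}$, giving $\uln{(-\bX)}=-\oln{\bX}$. The same substitution yields the companion reflection $\ooline{(-\bX)}=-\uuline{\bX}$ for the $\sup_s$-based operators of Definition \ref{defn.ooline.uuline}, which I reuse below. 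For the ordering chain \eqref{eq.sc.P3}, the outer inequalities $\oln{\bX}\le\ooline{\bX}$ and $\uuline{\bX}\le\uln{\bX}$ follow at once from $\inf_s(\cdot)\le\sup_s(\cdot)$, since the defining set for a $\sup_s$ operator is contained in that for the corresponding $\inf_s$ operator. The cross inequalities $\uuline{\bX}\le\oln{\bX}$ and $\uln{\bX}\le\ooline{\bX}$ are less immediate and I would argue by contradiction: if $\uuline{\bX}>\oln{\bX}$, pick $x$ strictly between them; then $x<\uuline{\bX}$ gives $\sup_s\Pr\{X_{sn}\le x\}\to0$, which forces $\Pr\{X_{s_n n}\le x\}\to0$ along \emph{any} state sequence, while $x>\oln{\bX}$ produces states $s_n$ with $\Pr\{X_{s_n n}\ge x\}\to0$; but $\Pr\{X_{s_n n}\ge x\}\ge 1-\Pr\{X_{s_n n}\le x\}\to1$, a contradiction. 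The inequality $\uln{\bX}\le\ooline{\bX}$ is symmetric, and $\min\le\max$ is trivial.

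Property \eqref{eq.sc.P4} is a monotonicity statement relating the single-state and compound operators. Fixing a state $s_0$ and any $x<\uln{\bX}_{s_0}$, the defining convergence $\Pr\{X_{s_0n}\le x\}\to0$ gives $\inf_s\Pr\{X_{sn}\le x\}\le\Pr\{X_{s_0n}\le x\}\to0$, so $x\le\uln{\bX}$; taking $\sup$ over such $x$ and then over $s_0$ yields $\sup_s\uln{\bX}_s\le\uln{\bX}$. The companion bound $\oln{\bX}\le\inf_s\oln{\bX}_s$ follows by the same device applied to the upper-tail events.

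The main obstacle is the additivity bound \eqref{eq.sc.P2}. For the upper bound I would show that for any $\alpha>\oln{\bX}$ and $\beta>\ooline{\bY}$ the value $\alpha+\beta$ lies in the set defining $\oln{(\bX+\bY)}$, and then let $\alpha\downarrow\oln{\bX}$ and $\beta\downarrow\ooline{\bY}$. The elementary input is the union bound $\Pr\{X_{sn}+Y_{sn}\ge\alpha+\beta\}\le\Pr\{X_{sn}\ge\alpha\}+\Pr\{Y_{sn}\ge\beta\}$. The delicate point — and the reason the bound mixes the $\inf_s$-operator $\oln{\bX}$ with the $\sup_s$-operator $\ooline{\bY}$ — is that $\inf_s$ does not distribute over a sum of two state-dependent terms. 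I would resolve this by using $\beta>\ooline{\bY}$ to make $\sup_s\Pr\{Y_{sn}\ge\beta\}<\varepsilon$ \emph{uniformly}, and $\alpha>\oln{\bX}$ to select, for each large $n$, a state $s_n$ with $\Pr\{X_{s_n n}\ge\alpha\}<\varepsilon$; evaluating the union bound at that single $s_n$ gives $\inf_s\Pr\{X_{sn}+Y_{sn}\ge\alpha+\beta\}\le 2\varepsilon$, so the limit is zero. The lower bound then comes for free from the reflection identities: applying the already-proved upper bound to the decomposition $\bX=(\bX+\bY)+(-\bY)$ and using $\ooline{(-\bY)}=-\uuline{\bY}$ gives $\oln{\bX}\le\oln{(\bX+\bY)}-\uuline{\bY}$, which rearranges to the claimed lower bound $\oln{\bX}+\uuline{\bY}\le\oln{(\bX+\bY)}$.

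Finally, \eqref{eq.sc.P5} is a corollary of \eqref{eq.sc.P2}: uniform convergence in probability $Y_{sn}\overset{\Pr}{\rightarrow}y_0$ in the sense of \eqref{eq.uni.conv.def} forces $\uuline{\bY}=\ooline{\bY}=y_0$, because $\sup_s\Pr\{Y_{sn}\ge x\}\to0$ for $x>y_0$ and $\to1$ for $x<y_0$, and symmetrically for the lower tail. The two bounds in \eqref{eq.sc.P2} then pinch together to give $\oln{(\bX+\bY)}=\oln{\bX}+y_0$.
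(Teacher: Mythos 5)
Your proof is correct, and its overall skeleton coincides with the paper's: the reflection identity \eqref{eq.sc.P1} by change of variable, the lower bound in \eqref{eq.sc.P2} reduced to the upper bound via the decomposition $\bX=(\bX+\bY)+(-\bY)$, \eqref{eq.sc.P4} by comparing the compound operator against a fixed state $s_0$, and \eqref{eq.sc.P5} as a pinch of \eqref{eq.sc.P2} once $\uuline{\bY}=\ooline{\bY}=y_0$ is observed. You diverge genuinely at the two nontrivial steps. For the upper bound in \eqref{eq.sc.P2}, the paper fixes $x=\oln{\bX}+\ooline{\bY}+\delta$, conditions on the event $B=\{Y_{sn}<\ooline{\bY}+\delta\}$ and its complement, and runs a $\liminf$/$\limsup$ chain of (in)equalities; you instead use the union bound $\Pr\{X_{sn}+Y_{sn}\ge\alpha+\beta\}\le\Pr\{X_{sn}\ge\alpha\}+\Pr\{Y_{sn}\ge\beta\}$ and evaluate it, for each large $n$, at a near-minimizing state $s_n$ for the $\bX$-tail, while the $\bY$-tail is controlled uniformly in $s$ because $\beta>\ooline{\bY}$. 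Your version is shorter, avoids the conditional-probability bookkeeping, and makes explicit the structural reason the bound must pair the $\inf_s$-operator on $\bX$ with the $\sup_s$-operator on $\bY$: $\inf_s$ does not distribute over a sum of state-dependent tails, so one of the two must be controlled uniformly. (The one routine point you leave implicit --- that $\alpha>\oln{\bX}$ indeed places $\alpha$ in the defining set --- follows from monotonicity of tail probabilities, i.e.\ the defining sets are upward closed; add a sentence.) Likewise, for the cross inequalities $\uuline{\bX}\le\oln{\bX}$ and $\uln{\bX}\le\ooline{\bX}$ in \eqref{eq.sc.P3} you argue by contradiction along a state sequence via $\Pr\{X_{s_n n}\ge x\}\ge 1-\Pr\{X_{s_n n}\le x\}$, whereas the paper compares the defining sets $\Omega_1$ and $\Omega_3$ directly and shows every element of the first lies below every element of the second --- the same fact in different clothing, with your route perhaps more transparent and the paper's more economical. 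One small point in your favour: the paper justifies the lower bound of \eqref{eq.sc.P2} by citing \eqref{eq.sc.P1}, which is the single-bar reflection, while the identity actually needed under the substitution $\bY\rightarrow-\bY$ is the double-bar one, $\ooline{(-\bY)}=-\uuline{\bY}$; you state it and note that it follows by the same change of variable, closing that small gap explicitly.
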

\begin{proof}
See Appendix.
\end{proof}

Strict inequalities in Proposition \ref{prop.sc.prop} can be demonstrated via examples.

\textbf{Example 1}: Let $X_{1n}$ and $X_{2n}$ be uniformly-distributed random variables,
\ba
X_{1n} \sim  \textsf{uni}[0,2],\ X_{2n} \sim  \textsf{uni}[1,3]
\ea
so that
\ba
\uuline{\bX}=0,\ \uln{\bX}=1,\ \oln{\bX}=2,\ \ooline{\bX}=3
\ea
and all inequalities in \eqref{eq.sc.P3} are strict. Since
\ba
\sup_s \uln{\bX}_s=1,\ \inf_s \oln{\bX}_s=2,
\ea
this example also demonstrates that the inequalities in \eqref{eq.sc.P4} can become equalities.

To demonstrate that the inequalities in \eqref{eq.sc.P2} can be strict, set $X_{sn}$ to be deterministic constant and $Y_{sn}$ as $X$ in Example 1 above.

\textbf{Example 2}: To see that the inequalities in \eqref{eq.sc.P4} can be strict, let $X_{sn}$ be Bernoulli random variables as follows:
\ba
X_{sn} \sim  \textsf{Ber}\{1-p_{sn}\},\ p_{sn} = \frac{n}{n+s},\ s \ge 0
\ea
so that $\Pr\{X_{sn}=0\}=p_{sn}$. It is straightforward to see that $\uln{\bX}_s=\oln{\bX}_s=0$ for any $s$ so that
\ba
\sup_s \uln{\bX}_s= \inf_s \oln{\bX}_s=0,
\ea
yet $\uln{\bX}=1$ so that 1st inequality is strict while 2nd one becomes equality since $\oln{\bX}=0$. To see that this inequality can be strict, set $X_{sn} \sim  \textsf{Ber}\{p_{sn}\}$ instead, so that
\ba
\sup_s \uln{\bX}_s= \inf_s \oln{\bX}_s=1,
\ea
yet $\oln{\bX}=0$.

Using Example 1 and its modifications, see Fig. \ref{fig.H-A} and \ref{fig.H-B}, one can also demonstrate that there is no specific relationship between $\uln{\bX}$ and $\oln{\bX}$ in general, i.e. neither $\uln{\bX} \le \oln{\bX}$ nor $\uln{\bX} \ge \oln{\bX}$ are true, unlike $\uuline{\bX} \le \ooline{\bX}$ that holds in full generality. In a similar way, it can be shown that there exists no specific relationship between $\sup_s \oln{\bX}_s$ and $\uln{\bX}$. This also holds for $\inf_s \uln{\bX}_s$ and $\oln{\bX}$.

\begin{figure}[t]%[htbp]
\centerline{\includegraphics[width=2.5in]{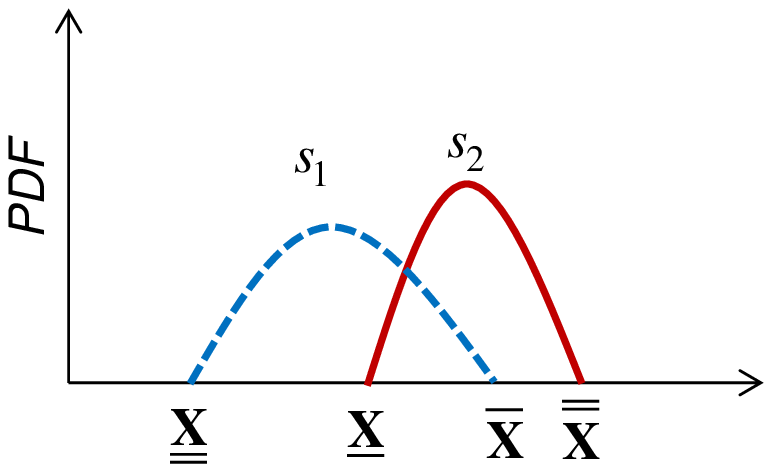}}
\caption{Asymptotic distribution of a random 2-state sequence $\bX$ and related quantities. Note that $\uln{\bX} < \oln{\bX}$.}
\label{fig.H-A}
\end{figure}

\begin{figure}[t]%[htbp]
\centerline{\includegraphics[width=2.5in]{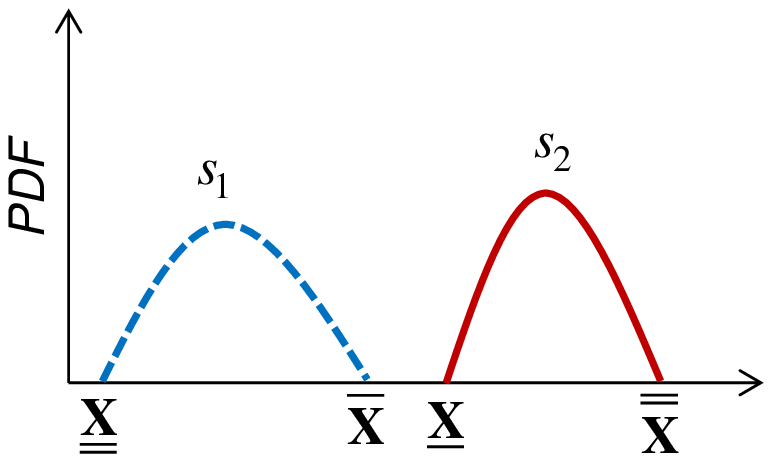}}
\caption{Asymptotic distribution of a random 2-state sequence $\bX$ and related quantities. Note that $\uln{\bX} > \oln{\bX}$.}
\label{fig.H-B}
\end{figure}

Using Proposition 9 in \cite{Loyka-15-2} and \eqref{eq.sc.P1}, the inequalities in \eqref{eq.sc.P4} can be refined as follows:
\ba\notag
\label{eq.sc.P4a}
&\uuline{\bX} \le \inf_s \uln{\bX}_s \le \sup_s \underline{\bX}_s \le \uln{\bX} \le \ooline{\bX},\\
&\uuline{\bX} \le \oln{\bX} \le \inf_s \overline{\bX}_s \le \sup_s \oln{\bX}_s \le \ooline{\bX}
\ea

A special case of \eqref{eq.sc.P2} is when $Y_{sn} = b$, i.e. a constant, so that, for any $a\ge 0$,
\ba
\oln{(a\bX+b)}=a\oln{\bX}+ b
\ea
i.e. $\oln{\{\cdot\}}$ is a linear operator for positive $a$. It is straightforward to see that, for negative $a$,
\ba
\oln{(a\bX+b)}=a\uln{\bX}+ b
\ea

Let $\uln{H}(\bXi) = \uln{\{n^{-1}h(\Xi_s^n|s)\}}$ and likewise for $\oln{H}(\bXi)$. In addition to its properties inherited from Proposition \ref{prop.sc.prop}, it also satisfies
\ba
0 \le \uln{H}(\bXi), \oln{H}(\bXi) \le \log M
\ea
where 1st inequality holds in full generality and 2nd one - for $M$-ary alphabets. We are now in a position to establish a sufficient and necessary condition for the strong converse to hold.

\begin{thm}
\label{thm.sc}
The compound channel with additive noise in \eqref{eq.ch.model} under the full Rx CSI satisfies the strong converse condition for both feedback and no feedback cases if and only if
\ba
\label{eq.thm.SC.1}
\ooline{H}(\bXi)=\uln{H}(\bXi)
\ea
If the compound noise is uniform, this reduces to
\ba
\sup_s\oln{H}(\bXi|s)=\uln{H}(\bXi)
\ea
\end{thm}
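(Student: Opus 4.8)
The plan is to reduce the strong-converse question entirely to the spectrum of the noise entropy density $n^{-1}h(\Xi_s^n|s)$, whose distribution, by \eqref{eq.p(sy)=p(xi)}, is the \emph{same for every code} and identical in the feedback and no-feedback cases. The starting observation is that for an arbitrary code the normalized information density decomposes as $n^{-1}i(W^n;Y^n|s)=h_{sn}(Y^n)-n^{-1}h(\Xi_s^n|s)$, where $h_{sn}(Y^n)=-n^{-1}\log p_s(Y^n)$. The $M$-ary output supplies the code- and state-uniform estimate $\Pr\{h_{sn}(Y^n)>\log M+\delta\}<e^{-n\delta}$ (the same bound behind $\ooline{H}(\bY)\le\log M$ in \eqref{eq.properties.8}). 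Hence, up to the exponentially small term $e^{-n\delta}$, the information density is controlled by $\log M+\delta-n^{-1}h(\Xi_s^n|s)$, and the whole problem transfers to the compound random sequence $\{n^{-1}h(\Xi_s^n|s)\}$.

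For sufficiency I would invoke the compound Verdu--Han converse (the machinery behind Theorem \ref{thm.C.general}), which for any code of rate $r_n$ gives $\varepsilon_n\ge\sup_s\Pr\{n^{-1}i(W^n;Y^n|s)\le r_n-\gamma\}-e^{-n\gamma}$. Feeding in the decomposition above yields $\varepsilon_n\ge\sup_s\Pr\{n^{-1}h(\Xi_s^n|s)\ge\log M+\delta-r_n+\gamma\}-e^{-n\delta}-e^{-n\gamma}$. Assuming \eqref{eq.thm.SC.1} and $\liminf_n r_n>C_c=\log M-\ooline{H}(\bXi)=\log M-\uln{H}(\bXi)$, one chooses $\delta,\gamma$ small so that the threshold eventually lies below a fixed level $x_0<\uln{H}(\bXi)$. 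By the very definition \eqref{eq.comp.inf-sup.BW}, $x_0<\uln{H}(\bXi)$ forces $\inf_s\Pr\{n^{-1}h(\Xi_s^n|s)\le x_0\}\to0$, i.e. $\sup_s\Pr\{n^{-1}h(\Xi_s^n|s)>x_0\}\to1$; hence the dominating supremum tends to $1$ and $\varepsilon_n\to1$, establishing the strong converse.

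For necessity I would argue by contraposition: assume $\uln{H}(\bXi)<\ooline{H}(\bXi)$ and pick a rate $r$ with $C_c=\log M-\ooline{H}(\bXi)<r<\log M-\uln{H}(\bXi)$. Applying the compound Feinstein lemma with the uniform input, for which $n^{-1}i(X^n;Y^n|s)=\log M-n^{-1}h(\Xi_s^n|s)$, produces a code obeying $\sup_s\varepsilon_{n,s}\le\sup_s\Pr\{n^{-1}h(\Xi_s^n|s)\ge\log M-r-\gamma\}+e^{-n\gamma}$. Because $r<\log M-\uln{H}(\bXi)$, for small $\gamma$ the threshold $\theta:=\log M-r-\gamma$ exceeds $\uln{H}(\bXi)$; by \eqref{eq.comp.inf-sup.BW} this means $\inf_s\Pr\{n^{-1}h(\Xi_s^n|s)\le\theta\}$ does not tend to $0$, so along a subsequence $\sup_s\Pr\{n^{-1}h(\Xi_s^n|s)\ge\theta\}\le1-c$ for some $c>0$. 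Thus $\liminf_n\varepsilon_n\le1-c<1$ although $\liminf_n r_n=r>C_c$, contradicting the strong converse. Since the reduction to the noise entropy density is code-independent and, via \eqref{eq.p(sy)=p(xi)}, valid with feedback, both directions cover the feedback and no-feedback cases at once. The uniform-noise specialization is then immediate: Lemma \ref{lemma.ooH-supHs} gives $\ooline{H}(\bXi)=\sup_s\oln{H}(\bXi|s)$ precisely when the noise is uniform, and substituting this into \eqref{eq.thm.SC.1} yields $\sup_s\oln{H}(\bXi|s)=\uln{H}(\bXi)$.

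The main obstacle I anticipate is the bookkeeping around the compound operators: translating the $\inf_s$ buried in the definition of $\uln{H}(\bXi)$ into statements about $\sup_s\Pr\{\cdot\}$ tending to $1$ (sufficiency) or staying bounded below $1$ along a subsequence (necessity), while respecting strict versus non-strict inequalities at the threshold and the fact that $\lim$ and $\sup_s$ need not commute (cf. \eqref{eq.suplim<=limsup}). A secondary point requiring care is justifying that the noise entropy-density spectrum is genuinely unaffected by the code in the feedback case; this rests on the identity \eqref{eq.p(sy)=p(xi)} rather than on independence of noise and input.
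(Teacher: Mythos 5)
Your proposal is correct in substance but takes a genuinely different route from the paper's. The paper does not re-derive the operational statement at all: it invokes the general characterization from \cite{Loyka-15-2}\cite{Loyka-16} --- the compound channel satisfies the strong converse if and only if $\sup_{\bX}\uuline{I}(\bX;\bY)=\sup_{\bX}\oln{I}(\bX;\bY)$ --- and then evaluates the right-hand side for the additive-noise channel with feedback by mirroring the capacity computation of Theorem \ref{thm.C_FB=C_NFB}: $\sup_{\bW,\bF}\oln{I}(\bW;\bY)\le\ooline{H}(\bY)-\uln{H}(\bY|\bW)\le\log M-\uln{H}(\bXi)$ via Proposition \ref{prop.sc.prop} and the identity \eqref{eq.p(sy)=p(xi)}, with equality attained by the uniform input ignoring feedback; comparing with $C_{FB}=\log M-\ooline{H}(\bXi)$ yields \eqref{eq.thm.SC.1}. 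You instead unpack that cited equivalence from first principles: your sufficiency direction essentially reproves the ``if'' half of the criterion via the compound Verdu--Han converse together with the code- and state-independent output bound $\Pr\{h_{sn}(Y^n)>\log M+\delta\}<e^{-n\delta}$, and your necessity direction reproves the ``only if'' half via the compound Feinstein lemma under the uniform input, for which $n^{-1}i(X^n;Y^n|s)=\log M-n^{-1}h(\Xi^n_s|s)$ exactly. What your route buys is self-containedness and an explicit operational picture ($\varepsilon_n\to1$ under \eqref{eq.thm.SC.1}, versus a subsequence with error bounded away from $1$ when it fails); what the paper's route buys is brevity and the structural observation that, like $\sup\uuline{I}$, the quantity $\sup\oln{I}$ is unaffected by feedback and maximized by the uniform input. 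Two details to tighten in your write-up, both of which you flagged yourself: (i) in the necessity step, ``$\inf_s\Pr\{n^{-1}h(\Xi^n_s|s)\le\theta\}$ bounded below by $c$ along a subsequence'' yields $\sup_s\Pr\{n^{-1}h(\Xi^n_s|s)>\theta\}\le 1-c$, not the non-strict event at the same threshold, so you must interpose $\theta'\in\bL\uln{H}(\bXi),\theta{\bR}$ and use $\{n^{-1}h(\Xi^n_s|s)\ge\theta\}\subseteq\{n^{-1}h(\Xi^n_s|s)>\theta'\}$ --- routine, but needed because the definitions in \eqref{eq.comp.inf-sup.BW} mix strict and non-strict inequalities; (ii) you should state explicitly that the Feinstein code, constructed without feedback, is admissible in the feedback model as well, so the single counterexample code defeats the strong converse in both cases, just as \eqref{eq.p(sy)=p(xi)} makes your converse bound valid for arbitrary feedback encoders. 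With those repairs your argument is sound, and the uniform-noise specialization via Lemma \ref{lemma.ooH-supHs} is exactly the paper's.
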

\begin{proof}
See Appendix.
\end{proof}

Fig. \ref{fig.SC} illustrates the condition of strong converse for a 2-state channel.

\begin{figure}[t]%[htbp]
\centerline{\includegraphics[width=2.5in]{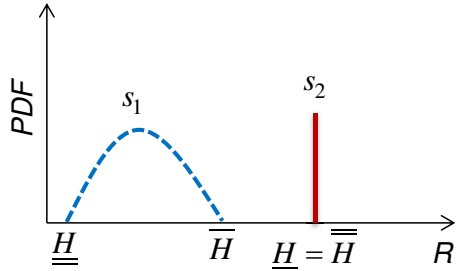}}
\caption{Asymptotic distribution of the noise entropy density rate for a 2-state channel with strong converse and related entropy density rates.}
\label{fig.SC}
\end{figure}

Using Proposition 27 in \cite{Loyka-15-2} under the optimal (uniform) input $\bX^*$ in combination with \eqref{eq.strong.conv.2}, one further obtains under the strong converse condition \eqref{eq.thm.SC.1}:
\ba
\label{eq.HbXi=H}
\ooline{H}(\bXi)= \limsup_{n\rightarrow\infty} \sup_{s} \frac{1}{n} H(\Xi_{s}^n)
\ea
where $H(\Xi_{s}^n)$ is the ergodic entropy, i.e. the compound sup-entropy rate $\ooline{H}(\bXi)$ coincides with the ergodic entropy rate of the noise (under its worst states), even though no ergodicity (or information stability) was imposed on the noise upfront\footnote{Note also that \eqref{eq.HbXi=H} equates two very different quantities: while the definition of $H(\Xi_{s}^n)$ is based on the expectation, so it is an ergodic quantity, that of $\ooline{H}(\bXi)$ does not use expectation at all.}. Hence, one concludes that the strong converse condition forces the worst-case noise to behave ergodically and hence the worst-case noise ergodicity is both necessary and sufficient for the strong converse to hold. This conclusion holds for both feedback and no feedback cases.

While there is no specific ordering between $\uln{H}(\bXi)$ and $\oln{H}(\bXi)$ or between $\sup_s \oln{H}(\bXi|s)$ and $\uln{H}(\bXi)$ in general (as indicated by the examples above), such ordering is induced by the strong converse, as indicated below.

\begin{cor}
Under the strong converse condition in Theorem \ref{thm.sc}, the following ordering holds:
\ba
\label{eq.cor.sc}
\oln{H}(\bXi) \le \inf_s \oln{H}(\bXi|s)\le \sup_s \oln{H}(\bXi|s) \le  \uln{H}(\bXi)
\ea
which is thus a necessary condition for the strong converse to hold.
\end{cor}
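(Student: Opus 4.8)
The plan is to derive the corollary directly from the refined operator ordering \eqref{eq.sc.P4a} together with the strong converse equality of Theorem \ref{thm.sc}; no new machinery is required.

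First I would specialize \eqref{eq.sc.P4a} to the compound random sequence given by the noise entropy density rate, i.e. set $X_{sn} = n^{-1} h(\Xi_s^n|s)$. With this identification one has $\oln{\bX}_s = \oln{H}(\bXi|s)$, $\oln{\bX} = \oln{H}(\bXi)$, $\uln{\bX} = \uln{H}(\bXi)$ and $\ooline{\bX} = \ooline{H}(\bXi)$. Since \eqref{eq.sc.P4a} was established for an \emph{arbitrary} compound random sequence (with no structural assumption on $X_{sn}$), its second line applies verbatim and gives
\ba
\uuline{H}(\bXi) \le \oln{H}(\bXi) \le \inf_s \oln{H}(\bXi|s) \le \sup_s \oln{H}(\bXi|s) \le \ooline{H}(\bXi),
\ea
where the inner inequality $\inf_s \oln{H}(\bXi|s) \le \sup_s \oln{H}(\bXi|s)$ is trivial.

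The second step is simply to tighten the right endpoint. Under the hypothesis of Theorem \ref{thm.sc} the strong converse holds, which is equivalent to the equality $\ooline{H}(\bXi) = \uln{H}(\bXi)$ in \eqref{eq.thm.SC.1}. Substituting this into the last term of the chain above collapses $\ooline{H}(\bXi)$ to $\uln{H}(\bXi)$ and yields precisely the claimed ordering \eqref{eq.cor.sc}.

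I do not expect any genuine obstacle: the statement is an immediate consequence of \eqref{eq.sc.P4a} and \eqref{eq.thm.SC.1}. The only point deserving (minor) care is verifying that \eqref{eq.sc.P4a} is legitimately applied here, but this is unproblematic since that display holds for every compound random sequence and the noise entropy density rate is one such sequence. In particular, none of the additive-noise structure of \eqref{eq.ch.model} is invoked in this final step; it enters only earlier, through Theorem \ref{thm.sc}, in order to guarantee the equality \eqref{eq.thm.SC.1}.
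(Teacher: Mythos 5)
Your proposal is correct and is essentially the paper's own proof: the paper also just chains the operator ordering (it cites \eqref{eq.sc.P4}, though the full chain really comes from \eqref{eq.sc.P4a}, which you invoke more precisely) with the strong converse equality \eqref{eq.thm.SC.1}, writing the identical sequence of inequalities in the reverse direction. There is nothing to add.
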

\begin{proof}
It follows from \eqref{eq.thm.SC.1} and \eqref{eq.sc.P4} that
\ba\notag
%\label
\uln{H}(\bXi) = \ooline{H}(\bXi) &\ge \sup_s \oln{H}(\bXi|s)\\
&\ge \inf_s \oln{H}(\bXi|s)\\ \notag
&\ge \oln{H}(\bXi)
\ea
\end{proof}

%======================================================================================
\subsection{Examples}

To gain further insight, one may use the examples of Section \ref{sec.Examples}. In particular, one obtains for Example 1
\ba
\ooline{H}(\bXi) = \uln{H}(\bXi) =\log M
\ea
when the uncertainty set is not bounded and
\ba
\ooline{H}(\bXi) = \uln{H}(\bXi) =0
\ea
when it is, so that the strong converse holds in both cases.

For Example 2,
\ba
\ooline{H}(\bXi) = \uln{H}(\bXi) = h(p_1)
\ea
when the uncertainty set is not bounded and
\ba
\ooline{H}(\bXi) = \uln{H}(\bXi) = h(p_2)
\ea
when it is, so that the strong converse holds in both cases as well.

For Example 3,
\ba
\ooline{H}(\bXi) = \uln{H}(\bXi) = 1
\ea
when the uncertainty set is not bounded and
\ba
\ooline{H}(\bXi) = \uln{H}(\bXi) = 0
\ea
when it is, so that the strong converse holds in both cases too.

Example 4 is more interesting. It is not too difficult to show that, in the general case,
\ba
\label{eq.ulnH(bXi)}
\uln{H}(\bXi) \le \min \{\uln{H}(\bW),\uln{H}(\bZ)\}
\ea
so that
\ba\notag
\uln{H}(\bXi) &\le \min \{\uln{H}(\bW),\uln{H}(\bZ)\}\\ \notag
 &\le \max \{\uln{H}(\bW),\uln{H}(\bZ)\}\\
  &\le \ooline{H}(\bXi) = \max\{\ooline{H}(\bW),\ooline{H}(\bZ)\}
\ea
and hence, if $\uln{H}(\bW) \neq \uln{H}(\bZ)$,
\ba
\uln{H}(\bXi) < \ooline{H}(\bXi)
\ea
so that the strong converse does not hold (one may use Examples 1-3 to construct component sequences $\bW, \bZ$ for further insights). Note that this conclusion holds for any $p$ as long as $0< p<1$.

\begin{remark}
It is tempting to conclude, based on $\uuline{H}(\bXi) = \min\{\uuline{H}(\bW),\uuline{H}(\bZ)\}$ which holds in full generality, that \eqref{eq.ulnH(bXi)} should hold with equality in general. To see that this is not the case, consider Example 4 with the following component sequences:
\ba\notag
w_s^n &=\{b_1..b_s,0..0\}\\
z_s^n &=\{0..0,b_{s+1}..b_n\}
\ea
where $b^n$ is a binary i.i.d. equiprobable sequence. This models a scenario where the noise randomly corrupts either 1st or 2nd part of a codeword and $s$ controls its length. It follows that \ba
\ooline{H}(\bXi)= \uln{H}(\bW)=\uln{H}(\bZ)=1
\ea
yet
\ba
\uln{H}(\bXi)=1/2 < 1 = \min\{\uln{H}(\bW),\uln{H}(\bZ)\}
\ea
Note that the strong converse does not hold in this case either, even though it holds for each component sequence individually and $\uln{H}(\bW)=\uln{H}(\bZ)$. Further note that $\oln{H}(\bXi)=1/2$, $\inf_s \oln{H}(\bXi|s) = \sup_s \oln{H}(\bXi|s) =1$ so that the last inequality in \eqref{eq.cor.sc} does not hold.
\end{remark}

%\newpage
%======================================================================================
\section{Conclusion}
The capacity of compound channels with additive noise and the Rx CSI has been studied. When all alphabets are discrete and there is no cost constraint, noiseless causal feedback does not increase the capacity. The impact of the channel state information at the transmitter has been quantified. In particular, it does not increase the capacity if the additive noise is a uniform compound process. Otherwise, it may provide significant improvement (unlike the feedback), which was shown via examples. A saddle-point has been shown to exist  in the information-theoretic game between the transmitter and the nature, even though the objective is not convex/concave in the right way. Finally, the sufficient and necessary condition for the strong converse to hold has been establish: it requires the worst-case noise sequence to behave ergodically, even though no ergodicity or information satiability requirements were imposed upfront. Examples are provided to facilitate understanding and insights.

%======================================================================================
\section{Acknowledgement}

The authors are grateful to A. Lapidoth and E. Telatar for their support, and to P. Mitran and M. Raginsky for fruitful discussions.

%======================================================================================
\section{Appendix}

%======================================================================================
\subsection{Proof of Lemma \ref{lemma.ooH-supHs}}
The proof of the 1st part (the inequality in general) is by contradiction. Assume that $\ooline{H}(\bXi) < \sup_s \overline{H}(\bXi|s)$, which implies that
\ba
\exists s_0 : \ooline{H} = \ooline{H}(\bXi) < \overline{H} = \overline{H}(\bXi|s_0)
\ea
Set
\ba
R = (\ooline{H} + \overline{H})/2 = \ooline{H} + \Delta = \overline{H} - \Delta
\ea
where $ \Delta = (\overline{H} - \ooline{H})/2 > 0$. Note that
\ba
%\label{eq.ooline{H}}
\lim_{n\rightarrow\infty} \Pr\left\{\frac{1}{n} h(\Xi^n_{s_0}|s_0) > \overline{H} - \Delta \right\} > 0
\ea
from the definition of $\overline{H}$. However,
\ba\notag
%\label{eq.ooline{H}}
0 &= \lim_{n\rightarrow\infty} \sup_s \Pr\left\{\frac{1}{n} h(\Xi^n_{s}|s) > \ooline{H} + \Delta \right\}\\
&\ge \lim_{n\rightarrow\infty} \Pr\left\{\frac{1}{n} h(\Xi^n_{s_0}|s_0) > \ooline{H} + \Delta \right\}\\ \notag
&= \lim_{n\rightarrow\infty} \Pr\left\{\frac{1}{n} h(\Xi^n_{s_0}|s_0) > \overline{H} - \Delta \right\} >0
\ea
where 1st equality is due to the definition of $\ooline{H}$, i.e. a contradiction,  from which the desired inequality follows.

The "if" part of the equality case (under uniform noise) is also proved by contradiction: assume that, under the uniform convergence,
\ba
\ooline{H} > \overline{H} = \sup_s \overline{H}(\bXi|s)
\ea
and set
\ba
R = (\ooline{H} + \overline{H})/2 = \ooline{H} - \Delta = \overline{H} + \Delta
\ea
where $ \Delta = (\ooline{H}- \overline{H})/2 > 0$, and hence
\ba
%\label{eq.ooline{H}}
\lim_{n\rightarrow\infty} \Pr\left\{n^{-1}  h(\Xi^n_{s}|s) > \overline{H} + \Delta \right\} = 0\ \forall s \in \sS
\ea
from the definition of $\overline{H}$, so that a contradiction follows
\ba\notag
%\label{eq.ooline{H}}
0 &= \sup_s \lim_{n\rightarrow\infty} \Pr\left\{n^{-1}  h(\Xi^n_{s}|s) > \overline{H} + \Delta \right\}\\
&= \lim_{n\rightarrow\infty} \sup_s \Pr\left\{n^{-1}  h(\Xi^n_{s}|s) > \overline{H} + \Delta \right\}\\ \notag
&= \lim_{n\rightarrow\infty} \sup_s \Pr\left\{n^{-1}  h(\Xi^n_{s}|s) > \ooline{H} - \Delta \right\} > 0
\ea
where 2nd equality is due to uniform convergence and the last inequality is from the definition of $\ooline{H}$.

To prove the "only if" part, assume that the equality holds and observe that
\ba\notag
%\label{eq.ooline{H}}
0 &= \lim_{n\rightarrow\infty}\sup_s \Pr\left\{n^{-1}  h(\Xi^n_{s}|s) > \ooline{H} + \Delta \right\}\\
&= \lim_{n\rightarrow\infty}\sup_s \Pr\left\{n^{-1}  h(\Xi^n_{s}|s) > \sup_s \overline{H}(\bXi|s) + \Delta \right\}
\ea
for any $\Delta>0$. The last equality implies uniform convergence: for any $\epsilon>0$ there exists such $n_0(\epsilon)$ that for any $n>n_0(\epsilon)$,
\ba\notag
%\label{eq.ooline{H}}
\sup_s \Pr\left\{n^{-1} h(\Xi^n_{s}|s) > \sup_s \overline{H}(\bXi|s) + \Delta \right\} < \epsilon
\ea
and hence the convergence is uniform.

%======================================================================================
\subsection{Proof of Proposition \ref{prop.sc.prop}}

Let $\liminf= \lim_{n\rightarrow\infty} \inf_{s}$ and likewise for $\limsup$. Eq. \eqref{eq.sc.P1} follows from the definition of $\uln{\{\cdot\}}$:
\ba
\notag
\uln{(-\bX)} &=  \sup \bLF x: \liminf \Pr\left\{ -X_{sn} \le x \right\} =0 \bRF\\ \notag
    &=  \sup \bLF x: \liminf \Pr\left\{ X_{sn} \ge -x \right\} =0 \bRF\\ \notag
    &=  \sup \bLF -z: \liminf \Pr\left\{ X_{sn} \ge z \right\} =0 \bRF\\ \notag
    &=  -\inf \bLF z: \liminf \Pr\left\{ X_{sn} \ge z \right\} =0 \bRF\\
    &=-\oln{\bX}
\ea
To prove \eqref{eq.sc.P2}, set $x=\oln{\bX}+\ooline{\bY} +\delta$ for some $\delta>0$, let $B$ denote the event $\{Y_{sn}<\ooline{\bY}+ \delta\}$ and $B^c$ - its complement, and observe that
\ba
\label{eq.sc.pp.2}
\notag
0 &= \liminf \Pr\{X_{sn}+\ooline{\bY} \ge x\}\\ \notag
    &= \liminf (\Pr\{X_{sn}+\ooline{\bY} \ge x| B\}\Pr\{B\}+ \Pr\{X_{sn}+\ooline{\bY} \ge x|B^c\}\Pr\{B^c\}) \\ \notag
    &\ge \liminf \Pr\{X_{sn}+\ooline{\bY} \ge x| B\}\Pr\{B\} \\ \notag
    &\ge \liminf \Pr\{X_{sn}+ Y_{sn} -\delta \ge x| B\}\Pr\{B\} \\ \notag
    &= \liminf \Pr\{X_{sn}+ Y_{sn} -\delta \ge x| B\}\Pr\{B\} + \limsup \Pr\{X_{sn}+ Y_{sn} -\delta \ge x| B^c\}\Pr\{B^c\}\\ \notag
    &\ge \liminf (\Pr\{X_{sn}+ Y_{sn} -\delta \ge x| B\}\Pr\{B\} + \Pr\{X_{sn}+ Y_{sn} -\delta \ge x| B^c\}\Pr\{B^c\})\\
    &= \liminf \Pr\{X_{sn}+ Y_{sn} \ge x +\delta \} =0
\ea
where 1st equality is from $x=\oln{\bX}+\ooline{\bY} +\delta$ and the definition of $\oln{\bX}$; 2nd inequality is from $\ooline{\bY}> Y_{sn}- \delta$ conditioned on $B$; 3rd equality is from
\ba
\limsup \Pr\{X_{sn}+ Y_{sn} -\delta\} \ge x| B^c\}\Pr\{B^c\} \le \limsup \Pr\{B^c\} = 0
\ea
where the last equality is from the definition of $B^c$; the last equality in \eqref{eq.sc.pp.2} is implied by the preceding chain. This last equality implies that $\oln{\bX+\bY} \le x +\delta$ so that
\ba
\oln{\bX+\bY} \le \oln{\bX}+\ooline{\bY} + 2\delta
\ea
for any $\delta>0$, which proves 2nd inequality in \eqref{eq.sc.P2}. To prove 1st one, use the substitutions $\bY \rightarrow -\bY$ and $\bX \rightarrow \bX+\bY$ in combination with \eqref{eq.sc.P1}.

To establish \eqref{eq.sc.P3}, we first show that $\uuline{\bX} \le \uln{\bX}$. To this end, let
\ba \notag
\Omega_1 &=  \bLF x: \limsup \Pr\left\{ X_{sn} \le x \right\} =0 \bRF\\
\Omega_2 &=  \bLF x: \liminf \Pr\left\{ X_{sn} \le x \right\} =0 \bRF
\ea
Since
\ba
\limsup \Pr\left\{ X_{sn} \le x \right\} \ge \liminf \Pr\left\{ X_{sn} \le x \right\}
\ea
it follows that $\Omega_1 \in \Omega_2$, which implies $\uuline{\bX} \le \uln{\bX}$ by using $\sup$. Next, we show that $\uuline{\bX} \le \oln{\bX}$. To this end, let
\ba \notag
\Omega_3 &=  \bLF x: \limsup \Pr\left\{ X_{sn} \le x \right\} =1 \bRF
\ea
and observe that
\ba \notag
%\label{eq.comp.inf-sup.BW}
\oln{\bX} &= \inf \bLF x: \liminf \Pr\left\{X_{sn} \ge x \right\} =0 \bRF \\
    &= \inf \bLF x: \liminf \Pr\left\{ X_{sn} > x \right\} =0 \bRF\\ \notag
    &= \inf \bLF x \in \Omega_3 \bRF
\ea
Since, for any $x_1\in \Omega_1$ and any $x_3\in \Omega_3$, it holds that $x_1< x_3$, so that
\ba
\uuline{\bX} = \sup \bLF x \in \Omega_1 \bRF \le \inf \bLF x \in \Omega_3 \bRF = \oln{\bX}
\ea
This establishes 1st inequality in \eqref{eq.sc.P3}. 2nd one is trivial. 3rd one can be established from 1st one using $\bX \rightarrow - \bX$.

To show 1st inequality in \eqref{eq.sc.P4}, recall that
\ba
\uln{\bX}_s &= \sup \bLF x: \lim_{n\rightarrow\infty} \Pr\left\{X_{sn} \le x \right\} =0 \bRF,
\ea
set $x_0=\uln{\bX}_s -\delta$ for some $\delta>0$ and observe that
\ba
0 = \lim_{n\rightarrow\infty} \Pr\left\{X_{sn} \le x_0 \right\} \ge \liminf \Pr\left\{X_{sn} \le x_0 \right\} = 0
\ea
where the last equality is implied by the preceding chain. This implies that $\uln{\bX} \ge x_0$. Since this holds for any $\delta>0$, $\uln{\bX} \ge \uln{\bX}_s$ follows. Since this holds for any $s$, 1st inequality in \eqref{eq.sc.P4} follows. 2nd one can be established via $\bX \rightarrow -\bX$.

To establish \eqref{eq.sc.P5}, observe that $Y_{sn} \overset{\Pr} {\rightarrow} y_0$ implies $\uuline{\bY} = \ooline{\bY}=y_0$ and use \eqref{eq.sc.P2}.

%======================================================================================
\subsection{Proof of Theorem \ref{thm.sc}}
We begin with a brief summary of the sufficient and necessary condition for the general compound channel to satisfy the strong converse.

\begin{thm}[\cite{Loyka-15-2}\cite{Loyka-16}]
The general compound channel with full Rx CSI and without feedback satisfies the strong converse condition if and only if
\ba
\label{eq.strong.conv.1}
C_c \triangleq \sup_{\bX} \uuline{I}(\bX;\bY) = \sup_{\bX} \oln{I}(\bX;\bY)
\ea
where $\sup$ is over all sequences of finite-dimensional input distributions. The condition \eqref{eq.strong.conv.1} is equivalent to the following: for any $\delta>0$ and an optimal input $\bX^*$ ,
\ba
\label{eq.strong.conv.2}
\lim_{n\rightarrow\infty} \inf_s \Pr\{|Z_{ns}^* - C_c|> \delta\} = 0
\ea
where $Z_{ns}^*=\frac{1}{n} i({X^n}^*;{Y^n}^*|s)$ is the information density rate under optimal input $\bX^*$, i.e. there exists such sequence of channel states $s(n)$ that the corresponding information density rate $Z_{ns}^*$ under optimal input $\bX^*$ converges in probability to the compound capacity $C_c$ (i.e. the channel represented by this sequence of states is information-stable, even though the original compound channel is not required to be information-stable).
\end{thm}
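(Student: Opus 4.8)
The plan is to establish the characterization by the information-spectrum method in two stages: first the equivalence between the strong converse and \eqref{eq.strong.conv.1}, then the reformulation \eqref{eq.strong.conv.1}$\Leftrightarrow$\eqref{eq.strong.conv.2}. The organizing principle is that the $\sup_s$ appearing in $\uuline{I}$ (Definition \ref{defn.ooline.uuline}) governs achievability, since a single codebook must serve all states at once, whereas the $\inf_s$ in the operator $\oln{\{\cdot\}}$ (Definition \ref{defn.cis.oper}) governs the strong converse, since it suffices that \emph{some} worst sequence of states drives the error to one. Throughout, $Z_{ns}=n^{-1}i(X^n;Y^n|s)$ is the information-density rate induced by the code under study, $\varepsilon_{ns}$ the conditional error probability at state $s$, and $\varepsilon_n=\sup_s\varepsilon_{ns}$ the worst-case error; the decoder may use the state (full Rx CSI) but the codebook may not, and $\oln{I}(\bX;\bY)$ is read as $\oln{\{Z_{ns}\}}$ in the sense of Definition \ref{defn.cis.oper}.

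For the sufficiency direction, \eqref{eq.strong.conv.1}$\Rightarrow$ strong converse, I would start from the per-state Verdu--Han converse bound applied to the single channel in force at each $s$: for every $\gamma>0$, $\varepsilon_{ns}\ge\Pr\{Z_{ns}\le r_n-\gamma\}-e^{-n\gamma}$. Let $\bX$ be the input induced by a given code with $\liminf_n r_n>C_c$, fix a constant $R_0$ with $C_c<R_0<\liminf_n r_n$, and take $\gamma>0$ small enough that $r_n-\gamma>R_0$ for all large $n$. Equality \eqref{eq.strong.conv.1} forces $\oln{I}(\bX;\bY)\le\sup_{\bX'}\oln{I}(\bX';\bY)=C_c<R_0$, whence by Definition \ref{defn.cis.oper} $\lim_n\inf_s\Pr\{Z_{ns}\ge R_0\}=0$; thus there is a state sequence $s(n)$ with $\Pr\{Z_{n,s(n)}\le R_0\}\to1$. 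Substituting $s(n)$ into the converse bound gives $\varepsilon_n\ge\varepsilon_{n,s(n)}\ge\Pr\{Z_{n,s(n)}\le R_0\}-e^{-n\gamma}\to1$, which is the strong converse. Here the $\inf_s$ inside $\oln{I}$ is precisely what manufactures the single bad-state sequence along which the error is forced to one, and it is where the hypothesis \eqref{eq.strong.conv.1} enters, bounding the code's $\oln{I}$ by $C_c$.

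For the necessity direction I would prove the contrapositive. If \eqref{eq.strong.conv.1} fails then $\sup_{\bX}\oln{I}(\bX;\bY)>C_c$, so some input $\bX_0$ admits a rate $R$ with $C_c<R<\oln{I}(\bX_0;\bY)$. By Definition \ref{defn.cis.oper}, $R<\oln{I}(\bX_0;\bY)$ yields $\limsup_n\inf_s\Pr\{Z_{ns}\ge R\}>0$, so along a subsequence $\inf_s\Pr\{Z_{ns}\ge R\}\ge c>0$; that is, the information density exceeds $R$ with probability at least $c$ \emph{simultaneously for every state}. A compound Feinstein lemma --- the achievability counterpart underlying Theorem \ref{thm.C.general}, built from a state-independent random codebook drawn from the marginal of $\bX_0$ together with a state-dependent threshold decoder --- then produces, along that subsequence, a code of rate slightly below $R$ (still exceeding $C_c$) whose per-state error obeys $\varepsilon_{ns}\le 1-c+e^{-n\gamma}$ for \emph{every} $s$, hence $\varepsilon_n\le1-c+o(1)<1$. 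A rate above $C_c$ with error bounded away from one contradicts the strong converse. I expect this compound Feinstein step to be the main obstacle: one must verify that a \emph{single} codebook, decoded with knowledge of $s$, keeps the worst-case error uniformly away from one, and it is exactly the uniform-in-$s$ bound $\inf_s\Pr\{Z_{ns}\ge R\}\ge c$ --- not a merely pointwise one --- that renders the union over states harmless.

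Finally, to pass between \eqref{eq.strong.conv.1} and \eqref{eq.strong.conv.2} I would fix a capacity-achieving input $\bX^*$, so that $\uuline{I}(\bX^*;\bY)=C_c$, and split the two-sided deviation as $\Pr\{|Z_{ns}^*-C_c|>\delta\}=\Pr\{Z_{ns}^*>C_c+\delta\}+\Pr\{Z_{ns}^*<C_c-\delta\}$. The left tail is automatically uniformly negligible, since $\uuline{I}(\bX^*;\bY)=C_c$ gives $\sup_s\Pr\{Z_{ns}^*\le C_c-\delta\}\to0$ by \eqref{eq.comp.inf-sup}; combining this with $\inf_s(f_s+g_s)\le\inf_s f_s+\sup_s g_s$ and the trivial reverse bound shows that \eqref{eq.strong.conv.2} holds if and only if $\lim_n\inf_s\Pr\{Z_{ns}^*\ge C_c+\delta\}=0$ for every $\delta>0$, which by Definition \ref{defn.cis.oper} is exactly $\oln{I}(\bX^*;\bY)=C_c$. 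The sandwich $C_c=\uuline{I}(\bX^*;\bY)\le\oln{I}(\bX^*;\bY)\le\sup_{\bX}\oln{I}(\bX;\bY)$ then gives \eqref{eq.strong.conv.1}$\Rightarrow$\eqref{eq.strong.conv.2} at once; the reverse implication is the delicate point, since from \eqref{eq.strong.conv.2} one controls $\oln{I}$ only for the chosen optimal input, and I would recover $\sup_{\bX}\oln{I}=C_c$ by arguing that a capacity-achieving input can be taken to maximize $\oln{I}$ as well and closing the loop through the stage-one equivalence. The operational reading of \eqref{eq.strong.conv.2} is that a worst-state sequence $s(n)$ makes $Z_{n,s(n)}^*$ concentrate on $C_c$, i.e.\ behave information-stably, even though the compound channel need not be.
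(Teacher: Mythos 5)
First, note what you are comparing against: the paper does not prove this theorem at all --- it is imported verbatim from \cite{Loyka-15-2}\cite{Loyka-16} as the starting point of the appendix proof of Theorem \ref{thm.sc} --- so your proposal must stand on its own, measured against the machinery of the cited works. Your Stage 1, strong converse $\Leftrightarrow$ \eqref{eq.strong.conv.1}, is sound and is exactly the information-spectrum route of those works: the per-state Verdu--Han converse bound together with the bad-state sequence $s(n)$ extracted from the $\inf_s$ in Definition \ref{defn.cis.oper} gives sufficiency, and the compound Feinstein lemma underlying Theorem \ref{thm.C.general} gives necessity; you correctly identify that the uniform bound $\inf_s\Pr\{Z_{ns}\ge R\}\ge c$ is what lets a single state-independent codebook keep every $\varepsilon_{ns}$ below $1-c+o(1)$. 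The only cosmetic omissions there are the monotonicity argument showing the defining sets in Definition \ref{defn.cis.oper} are one-sided (so that $\oln{I}(\bX;\bY)<R_0$ really yields $\lim_n\inf_s\Pr\{Z_{ns}\ge R_0\}=0$) and the patching of your subsequence of good codes into a full code sequence with $\liminf_n r_n>C_c$ so as to formally contradict the definition of strong converse. Your reduction of \eqref{eq.strong.conv.2} to the identity $\oln{I}(\bX^*;\bY)=C_c$ (tail split, $\inf_s(f_s+g_s)\le\inf_s f_s+\sup_s g_s$, and the sandwich $C_c=\uuline{I}(\bX^*;\bY)\le\oln{I}(\bX^*;\bY)$ from \eqref{eq.sc.P3}) is also correct and yields \eqref{eq.strong.conv.1}$\Rightarrow$\eqref{eq.strong.conv.2} cleanly.

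The genuine gap is the direction \eqref{eq.strong.conv.2}$\Rightarrow$\eqref{eq.strong.conv.1}, which you flag as delicate but do not prove. What \eqref{eq.strong.conv.2} delivers is $\oln{I}(\bX^*;\bY)=C_c$ for one particular capacity-achieving input, whereas \eqref{eq.strong.conv.1} asserts $\oln{I}(\bX;\bY)\le C_c$ for \emph{every} input; your proposed repair --- ``a capacity-achieving input can be taken to maximize $\oln{I}$ as well'' --- is precisely what needs proof, and nothing in the sketch supports it: $\uuline{I}$ and $\oln{I}$ are different functionals of the input, and there is no a priori reason their suprema are approached by a common $\bX$. Nor does the obvious fix work: given a suboptimal $\bX_0$ with $\oln{I}(\bX_0;\bY)>C_c$, mixing $p_n=(1-\alpha_n)p^*_n+\alpha_n p_{0,n}$ with constant $\alpha$ can destroy optimality (the spectrum of $\bX_0$ may carry mass below $C_c$, enforced through the $\sup_s$ in $\uuline{I}$, on exactly the blocklengths where its upper tail lives), while taking $\alpha_n\to 0$ to preserve optimality makes the transplanted upper-tail mass, of order $\alpha_n\inf_s\Pr\{Z_{ns}^{(0)}\ge C_c+\delta\}$, vanish, so the mixed input ends up satisfying \eqref{eq.strong.conv.2} anyway. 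As written, then, your argument establishes strong converse $\Leftrightarrow$ \eqref{eq.strong.conv.1} and \eqref{eq.strong.conv.1}$\Rightarrow$\eqref{eq.strong.conv.2}, but the remaining implication is open; closing it requires the precise formulation and quantifiers of \cite{Loyka-15-2}\cite{Loyka-16} (in particular, how ``an optimal input'' is quantified, and a construction exhibiting a violating optimal input when \eqref{eq.strong.conv.1} fails), none of which your proposal contains.
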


To adapt this result to the feedback case, we again consider $\bW$ as an input and optimize over both $\bW$ and $\bF$ so that \eqref{eq.strong.conv.1} becomes
\ba
\label{eq.strong.conv.1FB}
\sup_{\bW,\bF} \uuline{I}(\bW;\bY) = \sup_{\bW,\bF} \oln{I}(\bW;\bY)
\ea
Since the left-hand side has been already evaluated, we now evaluate the right-hand side. To this end, one can follow the steps similar to those in evaluating the left-hand side. First, observe that
\ba\notag
\sup_{\bW,\bF} \oln{I}(\bW;\bY) &\le \sup_{\bW,\bF} [\ooline{H}(\bY) - \uln{H}(\bY|\bW)]\\ \notag
&\le \log M - \inf_{\bW,\bF}\uln{H}(\bY|\bW)\\
&= \log M - \uln{H}(\bXi)
\ea
where 1st inequality is due to Proposition \ref{prop.sc.prop}; 2nd inequality follows from $\ooline{H}(\bY) \le \log M$ (since the alphabet is $M$-ary); the last equality is due to \eqref{eq.p(sy)=p(xi)} so that $\uln{H}(\bY|\bW)=\uln{H}(\bXi)$. Now, using no feedback and uniform input $\bX$, one obtains $\oln{I}(\bW;\bY) = \log M - \uln{H}(\bXi)$ so that
\ba
\sup_{\bW,\bF} \oln{I}(\bW;\bY) \ge \log M - \uln{H}(\bXi)
\ea
Combining the two inequalities,
\ba
\sup_{\bW,\bF} \oln{I}(\bW;\bY) = \log M - \uln{H}(\bXi)
\ea
It is remarkable that, similarly to $\uuline{I}(\bW;\bY)$, the optimal value of $\oln{I}(\bW;\bY)$ is not affected by feedback either and the best strategy is to use the uniformly-distributed input and ignore feedback. Combining the last equality with \eqref{eq.C_FB=C_NFB}, the desired condition follows.

\end{document}